\theoremstyle{plain}
\newtheorem{theorem}{Theorem}
\newtheorem{prop}{Proposition}
\newtheorem{lemma}{Lemma}
\theoremstyle{definition}
\newtheorem{remark}{Remark}
\newcommand{\dd}{\,\mathrm{d}}
\newcommand{\ts}{\hspace{0.5pt}}
\newcommand{\ZZ}{\mathbb{Z}}
\newcommand{\RR}{\mathbb{R}\ts}
\newcommand{\CC}{\mathbb{C}\ts}
\newcommand{\NN}{\mathbb{N}}
\newcommand{\XX}{\mathbb{X}}
\newcommand{\YY}{\mathbb{Y}}
\newcommand{\exend}{\hfill $\Diamond$}
\newcommand{\myfrac}[2]{\frac{\raisebox{-2pt}{$#1$}}{\raisebox{0.5pt}{$#2$}}}
\begin{document}

\title[Generalised Thue-Morse sequences]{Spectral and topological
properties of \\[2mm] a family of generalised Thue-Morse sequences}

\author{Michael Baake}
\author{Franz G\"{a}hler}

\address{Fakult\"{a}t f\"{u}r Mathematik, Universit\"{a}t Bielefeld,\newline
\hspace*{\parindent}Postfach 100131, 33501 Bielefeld, Germany}
\email{$\{$mbaake,gaehler$\}$@math.uni-bielefeld.de}

\author{Uwe Grimm}
\address{Department of Mathematics and Statistics,
The Open University,\newline 
\hspace*{\parindent}Walton Hall, Milton Keynes MK7 6AA, United Kingdom}
\email{u.g.grimm@open.ac.uk}

\begin{abstract}
  The classic middle-thirds Cantor set leads to a singular continuous
  measure via a distribution function that is know as the Devil's
  staircase. The support of the Cantor measure is a set of zero
  Lebesgue measure. Here, we discuss a class of singular continuous
  measures that emerge in mathematical diffraction theory and lead to
  somewhat similar distribution functions, yet with significant
  differences.  Various properties of these measures are derived. In
  particular, these measures have supports of full Lebesgue measure
  and possess strictly increasing distribution functions. In this
  sense, they mark the opposite end of what is possible for singular
  continuous measures.

  For each member of the family, the underlying dynamical system
  possesses a topological factor with maximal pure point spectrum, and
  a close relation to a solenoid, which is the Kronecker factor of the
  system. The inflation action on the continuous hull is sufficiently
  explicit to permit the calculation of the corresponding dynamical
  zeta functions. This is achieved as a corollary of analysing the
  Anderson-Putnam complex for the determination of the cohomological
  invariants of the corresponding tiling spaces.
\end{abstract}

\maketitle
\thispagestyle{empty}

\centerline{Dedicated to Robert V.\ Moody on the occasion of
          his 70th birthday}

\section{Introduction}

The probably most widely known singular continuous measure emerges as
the unique invariant probability measure for the iterated function
system \cite{Hutchinson} of the classic middle-thirds Cantor set.  The
construction and the distribution function $F$ of the resulting
measure are illustrated in Figure~\ref{fig:cantor}. Due to its shape,
$F$ is known as the Devil's staircase. It is a non-decreasing
continuous function that is constant almost everywhere, which
corresponds to the fact that the underlying measure gives no weight
to single points, but is concentrated on an uncountable set of zero
Lebesgue measure (the Cantor set). The Cantor measure is thus both
continuous and singular, hence purely singular continuous.

\begin{figure}
\begin{center}
\includegraphics[width=0.8\textwidth]{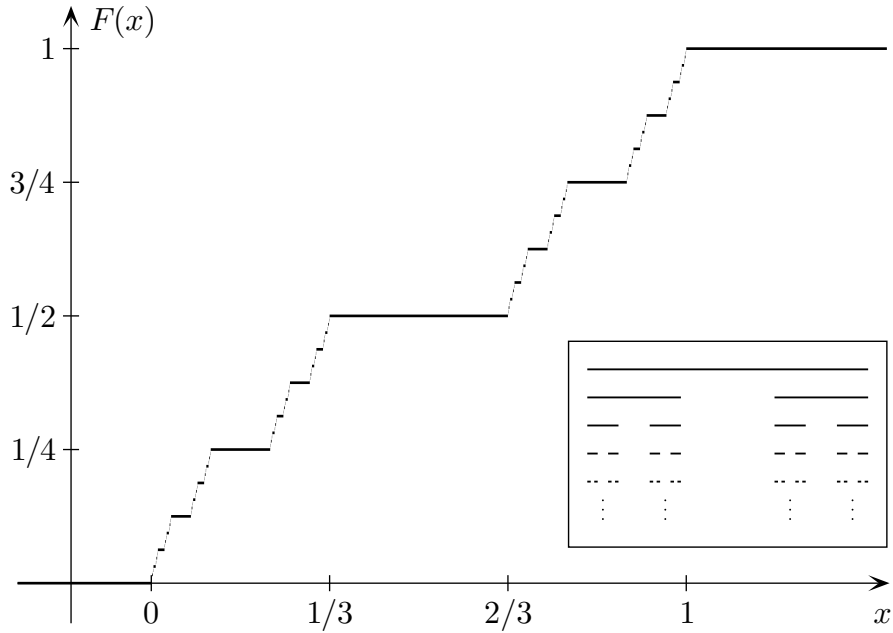}
\end{center}
\caption{\label{fig:cantor}The distribution function $F$ of the classic 
middle-thirds Cantor measure. The construction of the Cantor set is 
sketched in the inset.}
\end{figure}

Singular continuous measures occur in a wide range of physical
problems, most notably in the theory of non-periodic Schr\"{o}dinger
operators; see \cite{BHZ,David} and references therein for
examples. In particular, it is an amazing result that singular
continuous spectra are in a certain sense even generic here; compare
\cite{S,LS}.  One would also expect the appearance of singular
continuous measures in mathematical diffraction theory
\cite{Cowley,Hof,BGrev}, where the Thue-Morse sequence provides one of
the few really explicit examples. Recent experimental evidence
\cite{Withers} indicates that this spectral type might indeed be more
relevant to diffraction than presumed so far. This case has not yet
received the theoretical attention it deserves, though partial results
exist in the dynamical systems literature; compare \cite{Q,Kaku}.

The Thue-Morse system is an example of a bijective substitution of
constant length \cite{Q}. This class has a natural generalisation to
higher dimensions, and is studied in some detail in \cite{Nat1,Nat2};
see also \cite{Zaks} and references therein for related numerical
studies.  Bijective substitutions form an important case within the
larger class of lattice substitution systems. For the latter, from the
point of view of diffraction theory, a big step forward was achieved
in \cite{LM,LMS,LMS2}, where known criteria for pure pointedness in
one dimension \cite{D} were generalised to the case of
$\ZZ^{d}$-action.  Moreover, a systematic connection with model sets
was established (see \cite{M97,Moody00} for detailed expositions and
\cite{Lee} for a rather complete picture), and there are also explicit
algorithms to handle such cases; compare \cite{FS,AL} and references
therein.  Nevertheless, relatively little has been done for the case
without any coincidence in the sense of Dekking \cite{D} or its
generalisation to lattice substitution systems \cite{LM,LMS}.
Although it is believed that one should typically expect singular
continuous measures for bijective substitutions without coincidence,
explicit examples are rare.

As a first step to improve this situation, we investigate a class of
generalised Thue-Morse sequences in the spirit of \cite{Kea}. They are
defined by primitive substitution rules and provide a two-parameter
family of systems with purely singular continuous diffraction. Below,
we formulate a rigorous approach that is constructive and follows the
line of ideas that was originally used by Wiener \cite{Wie}, Mahler
\cite{Mah} and Kakutani \cite{Kaku} for the treatment of the classic
Thue-Morse case. Some of the measures were studied before (mainly by
scaling arguments and numerical methods) in the context of dimension
theory for correlation measures; compare \cite{Kea,Zaks} and
references therein.

The paper is organised as follows. We begin with a brief summary of
the Thue-Morse sequence with its spectral and topological properties,
where we also introduce our notation. Section~\ref{sec:gtm} treats the
family of generalised Thue-Morse sequences from \cite{BBGG}, where the
singular continuous nature of the diffraction spectra is proved and
the corresponding distribution functions are derived. Here, we also
briefly discuss the connection with a generalisation of the period
doubling sequence. The latter has pure point spectrum, and is a
topological factor of the generalised Thue-Morse sequence. This factor
has maximal pure point spectrum.  The diffraction measure of the
generalised Thue-Morse system is analysed in detail in
Section~\ref{sec:gtm-diffract}, and its Riesz product structure is
derived.  In Section~\ref{sec:topol}, we construct the continuous
hulls of the generalised Thue-Morse and period doubling sequences as
inverse limits of the substitution acting on the Anderson-Putnam cell
complex \cite{AP}, and employ this construction to compute and relate
their \v{C}ech cohomologies.  The substitution action on the \v{C}ech
cohomology is then used in Section~\ref{sec:zeta} to derive the
dynamical zeta functions of the corresponding substitution dynamical
systems.  Finally, we conclude with some further observations and open
problems.

\section{A summary of the classic Thue-Morse sequence}\label{sec:tm}

The classic Thue-Morse (or Prouhet-Thue-Morse, abbreviated as TM
below) sequence \cite{A} can be defined via a fixed point of the
primitive substitution
\begin{equation} \label{eq:tm-def}
   \varrho = \varrho^{}_{\mathrm{TM}} \! : \quad
   1 \mapsto 1 \bar{1} \, , \quad \bar{1} \mapsto \bar{1} 1
\end{equation}
on the binary alphabet $\{ 1,\bar{1}\}$. The one-sided fixed point
starting with $1$ reads
\begin{equation}   \label{eq:tm-fix}
    v = v^{}_{0} v^{}_{1} v^{}_{2} \ldots = 
    1 \bar{1} \bar{1} 1 \bar{1} 1 1 \bar{1} \ldots \ts ,
\end{equation}
while $\bar{v}$ is the fixed point starting with $\bar{1}$. One can
now define a two-sided sequence $w$ by
\begin{equation} \label{eq:symm-fix}
    w_i  =  \begin{cases}   v_i ,  & i\ge 0 , \\
                v_{-i-1}, & i<0 .  \end{cases}
\end{equation}
It is easy to check that $w = \ldots w^{}_{-2} w^{}_{-1} | w^{}_{0}
w^{}_{-1} \ldots = \ldots v^{}_{1} v^{}_{0} | v^{}_{0} v^{}_{1}
\ldots$ defines a $2$-cycle under $\varrho$, and hence a fixed point
for $\varrho^{2}$, with the central seed $1|1$ being legal. Recall
that a finite word is called \emph{legal} when it occurs in the
$n$-fold substitution of a single letter for some $n\in\NN$. An
iteration of $\varrho^{2}$ applied to this seed converges to $w$ in
the product topology, which is thus a two-sided fixed point of
$\varrho^{2}$ in the proper sense.

The sequence $w$ defines a dynamical system (under the action of the
group $\ZZ$) as follows. Its compact space is the (discrete)
\emph{hull}, obtained as the closure of the $\ZZ\ts$-orbit of $w$,
\[
      \XX^{\mathrm{TM}}_{} \, = \, 
     \overline{ \{ S^{i} w \mid i\in\ZZ\} }\ts ,
\]
where $S$ denotes the shift operator (with $(Sw)_{i} = w_{i+1}$) and
where the closure is taken in the local (or product) topology. Here,
two sequences are close when they agree on a large segment around the
origin (marked by $|$). Now, $(\XX^{\mathrm{TM}}_{}, \ZZ)$ is a
strictly ergodic dynamical system (hence uniquely ergodic and minimal
\cite{Q,W}). Its unique invariant probability measure is given via the
(absolute) frequencies of finite words (or patches) as the measures of
the corresponding cylinder sets, where the latter then generate the
(Borel) $\sigma$-algebra. Its minimality follows from the repetitivity
of the fixed point word $w$, which also implies that
$\XX^{\mathrm{TM}}_{} = \mathrm{LI} (w)$, which is the local
indistinguishability class of $w$. The latter consists of all elements
of $\{ 1,\bar{1} \}^{\ZZ}$ that are locally indistinguishable from
$w$.

Here, we are interested, for a given $w\in\XX^{\mathrm{TM}}_{}$, in
the diffraction of the (signed) Dirac comb 
\begin{equation}\label{eq:defcomb}
   \omega \, = \, w \, \delta^{}_{\ZZ}
   \, :=  \sum_{n\in\ZZ} w_n \ts \delta_n\ts ,
\end{equation}
where the symbols $1$ and $\bar{1}$ are interpreted as weights $1$ and
$-1$. This defines a mapping from $\XX^{\mathrm{TM}}_{}$ into the
signed translation bounded measures on $\ZZ$ (or on $\RR$). Since this
mapping is a homeomorphism between $\XX^{\mathrm{TM}}_{}$ and its image,
we use both points of view in parallel without further mentioning.

Given any $w\in\XX^{\mathrm{TM}}_{}$, the \emph{autocorrelation
measure} of the corresponding $\omega$ exists as a consequence of
unique ergodicity. It is defined as the volume-averaged (or Eberlein)
convolution
\[
   \gamma \, = \, \omega\circledast \ts\ts \widetilde{\omega} \, =  
   \lim_{N\to\infty} \frac{\omega^{}_{N} * 
   \widetilde{\omega^{}_{N}}}{2N\!+1} \, ,
\]
where $\omega^{}_{N}$ is the restriction of $\omega$ to $[-N,N]$ and
$\widetilde{\mu}$ is the `flipped-over' version of the measure $\mu$
defined by $\widetilde{\mu}(g):=\overline{\mu (\widetilde{g})}$ for
continuous functions $g$ of compact support, with $\widetilde{g}(x)
=\overline{g(-x)}$.  We use this general formulation to allow for
complex weights later on.  A short calculation shows that the
autocorrelation is of the form
\[
    \gamma \, = \, \sum_{m\in\ZZ} \eta(m) \, \delta_{m}
\]
with the autocorrelation coefficients
\begin{equation} \label{eq:auto-coeff}
    \eta(m) \, = \lim_{N\to\infty} \frac{1}{2N\! +1}
    \sum_{n=-N}^{N} w_{n} \, \overline{w_{n-m}} \ts .
\end{equation}
Note that $\gamma$ applies to \emph{all} sequences of
$\XX^{\mathrm{TM}}_{}$, by an application of the ergodic theorem, as
$\eta (m)$ is the orbit average of the continuous function $w\mapsto
w^{}_{0} \ts \overline{w^{}_{-m}}$ and our dynamical system is
uniquely ergodic.

\begin{remark}[Alternative approach]\label{rem:auto-versus-correlation}
  Without the diffraction context, it is possible to directly define
  the function $\eta\! : \, \ZZ \longrightarrow \CC$ by
  \eqref{eq:auto-coeff}.  It is then a positive definite function on
  $\ZZ$, with a representation as the (inverse) Fourier transform of a
  positive measure $\mu$ on the unit circle, by the Herglotz-Bochner
  theorem \cite{R}.  Which formulation one uses is largely a matter of
  taste.  We follow the route via the embedding as a measure on $\RR$,
  so that we get $\widehat{\gamma} = \mu * \delta^{}_{\ZZ}$ together
  with its interpretation as the diffraction measure of the Dirac comb
  $\omega$.  \exend
\end{remark}

We can now employ the special structure of our fixed point $w$ to
analyse $\gamma$. One finds that $\eta (0) = 1$, $\eta (-m) = \eta
(m)$ for all $m\in\ZZ$ and
\begin{equation} \label{eq:tm-coeff}
     \eta (m) \,  = \lim_{N\to\infty} \frac{1}{N}
        \sum_{n=0}^{N-1} v_{n} v_{n+m}
\end{equation}
for all $m\ge 0$. Here, the structure of $w$ and its relation to $v$
was used to derive \eqref{eq:tm-coeff} from \eqref{eq:auto-coeff}.
Observing that $v$ satisfies $ v^{}_{2n} = v^{}_{n}$ and $v^{}_{2n+1}
= \bar{v}^{}_{n}$ for all $n\ge 0$, one can employ
\eqref{eq:tm-coeff} to infer the linear recursion relations
\begin{equation} \label{eq:tm-rec}
     \eta (2m) \, = \, \eta (m) \quad \mbox{and} \quad
     \eta(2m\!+\!1) \, = \, - \myfrac{1}{2} \bigl( \eta (m) + 
     \eta (m\!+\!1) \bigr) ,
\end{equation}
which actually hold for all $m\in\ZZ$. These well-known relations
\cite{Kaku} will also follow from our more general results in
Section~\ref{sec:gtm} as a special case.  One finds $\eta (\pm 1) =
-1/3$ from solving the recursion for $m=0$ and $m=-1$ with
$\eta(0)=1$, while all other values are then recursively determined.

To analyse the \emph{diffraction measure} $\widehat{\gamma}$ of the TM
sequence (following \cite{Mah,Kaku}), one can start with its pure
point part. Defining $\varSigma (N) = \sum_{n=-N}^{N} \bigl( \eta (n)
\bigr)^{2}$, one derives $\varSigma (4N) \le \frac{3}{2}\varSigma(2N)$
from the recursion \eqref{eq:tm-rec}; see \cite{BG08} for the detailed
estimate needed. This implies $\frac{1}{N} \varSigma (N)
\xrightarrow{\, N\to\infty \,} 0$. By Wiener's criterion \cite{Wie},
this means $\bigl( \widehat{\gamma} \bigr)_{\mathsf{pp}} =0$, so that
$\widehat{\gamma}$ is a continuous measure (see Wiener's Corollary in
\cite[Sec.~I.7.13]{Katz} or Wiener's Lemma in \cite[Sec.\ 4.16]{Nad}
for details).

Defining the (continuous and non-decreasing) distribution function $F$
via $F(x) = \widehat{\gamma} \bigl( [0,x]\bigr)$, another consequence
of \eqref{eq:tm-rec} is the pair of functional relations
\[
   \dd F \bigl( \tfrac{x}{2} \bigr) \pm
   \dd F \bigl(\tfrac{x+1}{2} \bigr) = \biggl\{
   \begin{array}{c} 1 \\\!\!\! - \cos (\pi x)\!\!\! \end{array}
   \biggr\}   \dd F (x) \ts .
\]
Splitting $F$ into its \textsf{sc} and \textsf{ac} parts (which are
unique and must both satisfy these relations) now implies backwards
that the recursion \eqref{eq:tm-rec} holds separately for the two sets
of autocorrelation coefficients, $\eta^{}_{\mathsf{sc}}$ and
$\eta^{}_{\mathsf{ac}}$, with yet unknown initial conditions at
$0$. Since this means $\eta^{}_{\mathsf{ac}}(1) = -\frac{1}{3}\ts
\eta^{}_{\mathsf{ac}}(0)$ together with $\eta^{}_{\mathsf{ac}}(2m) =
\eta^{}_{\mathsf{ac}}(m)$ for all $m\in\NN$, an application of the
Riemann-Lebesgue lemma forces $\eta^{}_{\mathsf{ac}} (0) =0$, and
hence $\eta^{}_{\mathsf{ac}} (m) = 0$ for all $m\in\ZZ$, so that also
$\bigl( \widehat{\gamma} \bigr)_{\mathsf{ac}}=0$; compare
\cite{Kaku}. This shows that $\widehat{\gamma}$ is a singular
measure. With the previous argument, since $\widehat{\gamma} \ne 0$,
we see that it is a purely singular continuous
measure. Figure~\ref{fig1} shows an image, where we have used the
uniformly converging Volterra-type iteration
\[
     F^{}_{n+1} (x) = \myfrac{1}{2} \int_{0}^{2x}
     \bigl( 1 - \cos (\pi y)\bigr) F^{\ts \prime}_{n} (y)
     \dd y  \qquad \mbox{with} \qquad F^{}_{0} (x) = x
\]
to calculate $F$ with sufficient precision (note that $F(x+1) = F(x) +
1$, so that displaying $F$ on $[0,1]$ suffices). In contrast to the Devil's
staircase, the TM function is \emph{strictly} increasing, which means
that there is no plateau (which would indicate a gap in the support
of $\widehat{\gamma}$); see \cite{BG08} and references therein for
details.

\begin{figure}
\begin{center}
\includegraphics[width=0.76\textwidth]{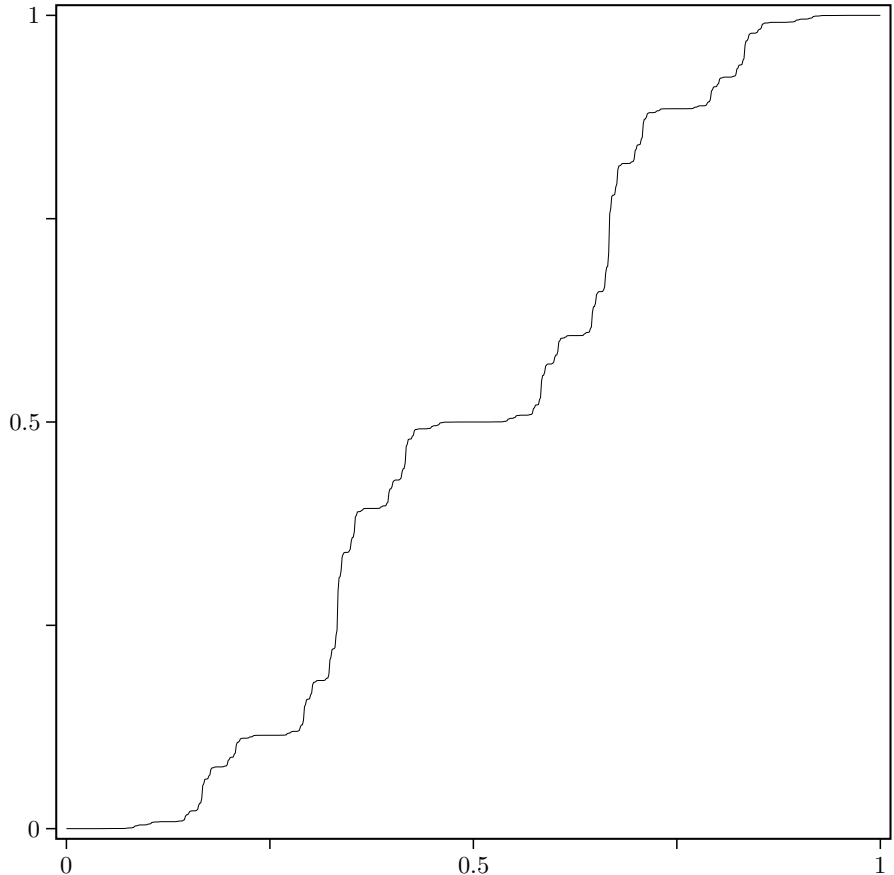}
\end{center}
\caption{\label{fig1}The strictly increasing distribution function of the 
classic, singular continuous TM measure on $[0,1]$.}
\end{figure}

\smallskip
Despite the above result, the TM sequence is closely related to
the period doubling sequence, via the (continuous) block map
\begin{equation} \label{eq:block-map}
  \phi \! : \quad 1\bar{1} \ts , \ts \bar{1}1 \mapsto a
  \, , \quad 11 \ts , \ts \bar{1}\bar{1} \mapsto b \, ,
\end{equation}
which defines an exact 2-to-1 surjection from the hull
$\XX^{\mathrm{TM}}_{}$ to $\XX^{\mathrm{pd}}_{}$, where the latter is
the hull of the period doubling substitution defined by
\begin{equation}\label{eq:pd-def}
   \varrho^{}_{\mathrm{pd}} \! : \quad
   a \mapsto ab \, , \quad b \mapsto aa\ts .
\end{equation}
Viewed as topological dynamical systems, this means that
$(\XX^{\mathrm{pd}}_{},\ZZ)$ is a factor of
$(\XX^{\mathrm{TM}}_{},\ZZ)$. Since both are strictly ergodic, this
extends to the corresponding measure theoretic dynamical systems, as
well as to the suspensions into continuous dynamical systems under the
action of $\RR$.  The dynamical spectrum of the TM system then
contains $\ZZ[\frac{1}{2}]$ as its pure point part \cite{Q}, which is
the entire dynamical spectrum of the period doubling system. We thus
are in the nice situation that a topological factor with maximal pure
point spectrum exists which is itself a substitution system.

The period doubling sequence can be described as a regular model set
with a $2$-adic internal space \cite{BMS,BM} and is thus pure point
diffractive. As another consequence, there is an almost everywhere
$1$-to-$1$ mapping \cite{M,BMS} of the continuous hull (see below)
onto a (dyadic) solenoid $\mathbb{S}= \mathbb{S}_{2}$.  Here, a
solenoid $\mathbb{S}_{m}$ (with $2 \le m\in \NN$ say) is the inverse
limit of the unit circle under the iterated multiplication by $m$. The
dyadic solenoid is obtained for $m=2$.

\medskip

The discrete hull $\XX^{\mathrm{TM}}_{}$ of the TM sequence has a
continuous counterpart (its suspension), which we call
$\YY^{\mathrm{TM}}_{}$. Instead of symbolic TM sequences, one
considers the corresponding tilings of the real line, with labelled
tiles of unit length. Such tilings are not bound to have their
vertices at integer positions, and the full translation group $\RR$
acts continuously on them. The continuous hull of a TM tiling is then
the closure of its $\RR$-orbit with respect to the local
topology. Here, two tilings are close if, possibly after a small
translation, they agree on a large interval around the origin.  For the
same reasons as in the case of the discrete hull, the corresponding
topological dynamical system $(\YY^{\mathrm{TM}}_{},\RR)$ is minimal
and uniquely ergodic, so that every TM tiling defines the same
continuous hull.  Similarly, a continuous hull is defined for the
period doubling sequence.

According to \cite{AP}, the continuous hull of a primitive
substitution tiling can be constructed as the inverse limit of an
iterated map on a finite CW-complex $\Gamma$, called the
Anderson-Putnam (AP) complex. The cells of $\Gamma$ are the tiles in
the tiling, possibly labelled to distinguish different neighbourhoods
of a tile within the tiling. Each point in $\Gamma$ actually
represents a cylinder set of tilings, with a specific neighbourhood at
the origin. The substitution map on the tiling induces a continuous
cellular map of $\Gamma$ onto itself, whose inverse limit is
homeomorphic to the continuous hull of the tiling.

Let $\Gamma_{n}$ be the CW-complex of the $n$th approximation step. It
is a nice feature of the corresponding inverse limit space
$\YY=\varprojlim \Gamma_n$ that its \v{C}ech cohomology $H^k(\YY)$ can
be computed as the direct (inductive) limit of the cohomologies of the
corresponding approximant spaces $H^k(\Gamma_n)$. In our case, we
have a single approximant space $\Gamma$, and a single map (the
substitution map $\varrho$) acting on it.  Consequently, $H^k(\YY)$ is
the inductive limit of the induced map $\varrho^{*}$ on
$H^k(\Gamma)$. Analogous inverse limit constructions also exist for
the hull of the period doubling tiling, $\YY^{\mathrm{pd}}_{}$, and for the
dyadic solenoid, $\mathbb{S}$.

As a consequence of the above, there is a $2$-to-$1$ cover $\phi \! :
\, \YY^{\mathrm{TM}}_{} \rightarrow \YY^{\mathrm{pd}}_{}$, and a
surjection $\psi \! : \, \YY^{\mathrm{pd}}_{} \rightarrow \mathbb{S}$
which is $1$-to-$1$ almost everywhere.  These maps induce well-defined
cellular maps on the associated AP complexes; see also \cite{Sadun}
for a general exposition.  We represent these maps by the same
symbols, $\phi$ and $\psi$.  They induce homomorphisms on the
cohomologies of the AP complexes, so that we have the following
commutative diagram:
\begin{equation} \label{eq:diagram-approx-tm}
   \begin{CD}
    H^k(\Gamma^{\mathrm{sol}}_{})   @>\psi^*>> H^k(\Gamma^{\mathrm{pd}}_{}) 
    @>\phi^*>> H^k(\Gamma^{\mathrm{TM}}_{}) \\
    @V \times 2 VV @V\varrho^{\ts\prime*}VV @VV \varrho^* V \\
    H^k(\Gamma^{\mathrm{sol}}_{})   @>\psi^*>> H^k(\Gamma^{\mathrm{pd}}_{}) 
    @>\phi^*>> H^k(\Gamma^{\mathrm{TM}}_{}) \\
   \end{CD}
\end{equation}
All these maps are explicitly known. The inductive limits along the
columns not only give the cohomologies of the continuous hulls, but
also determine the embeddings under the maps $\phi^*$ and $\psi^*$.
Although $H^1(\YY^{\mathrm{pd}}_{})$ and $H^1(\YY^{\mathrm{TM}}_{})$
are isomorphic, the former embeds (under $\phi^*$) in the latter as a
subgroup of index $2$, which reflects $\YY^{\mathrm{TM}}_{}$ being a
two-fold cover of $\YY^{\mathrm{pd}}_{}$. Furthermore, we get
$H^1(\YY^{\mathrm{pd}}_{})/\psi^*(H^1(\mathbb{S}))=\ZZ$. By an
application of \cite[Prop.~4]{BS}, compare also \cite[Ex.~7]{BS}, this
corresponds to the fact that there are exactly two orbits on which the
map $\psi$ fails to be $1$-to-$1$. These two orbits are merged into a
single orbit under $\psi$.

The action of the substitution on the cohomology of the AP complex
$H^k(\Gamma)$, more precisely the eigenvalues of this action, can be
used \cite{AP} to calculate the dynamical zeta function of the
substitution dynamical system, thus establishing a connection between
the action of the substitution on $H^k(\Gamma)$ and the number of
periodic orbits of the substitution in the continuous hull.  We skip
further details at this point because they will appear later as a
special case of our two-parameter family, which we discuss next.

\section{A family of generalised Thue-Morse sequences}\label{sec:gtm}

The TM sequence is sometimes considered as a rather special and possibly
rare example, which is misleading. In fact, there are many systems
with similar properties.  Let us demonstrate this point by following
\cite{Kea,BBGG}, where certain generalisations were introduced. In
particular, we consider the generalised Thue-Morse (gTM) sequences
defined by the primitive substitutions
\begin{equation} \label{eq:gen-tm-def}
   \varrho = \varrho^{}_{k,\ell} \! : \quad
   1 \mapsto 1^{k} \bar{1}^{\ell} \, , \quad
   \bar{1} \mapsto \bar{1}^{k} 1^{\ell}
\end{equation}
for arbitrary $k,\ell\in\NN$. Here, the one-sided fixed point
starting with $v^{}_{0}=1$ satisfies
\begin{equation} \label{eq:gen-tm-seq}
  v^{}_{m(k+\ell)+r} = \begin{cases}
  v^{}_{m} , & \mbox{if } 0\le r < k \ts , \\
  \bar{v}^{}_{m} , & \mbox{if } k\le r < k\!+\!\ell \ts ,
  \end{cases}
\end{equation}
for $m\ge 0$, as can easily be verified from the fixed point property.
A two-sided gTM sequence $w$ can be constructed as above in
Eq.~\eqref{eq:symm-fix}. Also analogous is the construction of the
discrete hull $\XX^{}_{k,\ell}$ as the orbit closure of $w$ under the
$\ZZ\ts$-action of the shift, where $\XX^{}_{1,1} =
\XX^{\mathrm{TM}}_{}$. We will drop the index when it is clear from
the context. 

\begin{prop}\label{prop:hull}
  Consider the substitution rule\/ $\varrho=\varrho^{}_{k,\ell}$ 
  of Eq.~\eqref{eq:gen-tm-def} for arbitrary, but fixed\/
  $k,\ell\in\NN$. The bi-infinite sequence $w$ that is constructed
  from the one-sided sequence $v$ of \eqref{eq:gen-tm-seq} via
  reflection as in Eq.~\eqref{eq:symm-fix} is a fixed point of\/
  $\varrho^{2}$, as is $w' = \varrho\ts (w)$. Both sequences are
  repetitive and locally indistinguishable.

  Moreover, $w$ is an infinite palindrome and defines the discrete
  hull\/ $\XX = \overline{\{S^{i} w\mid i\in\ZZ \}}$ under the action
  of the shift\/ $S$. This hull is reflection symmetric and minimal,
  and defines a strictly ergodic dynamical system. Similarly, when
  turning $w$ into a tiling of\/ $\RR$ by two intervals of length $1$
  that are distinguished by colour, the closure $\YY$ of the
  $\RR$-orbit of this tiling in the local topology defines a dynamical
  system $(\YY,\RR)$ that is strictly ergodic for the $\RR$-action.
\end{prop}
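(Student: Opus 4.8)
The plan is to proceed in the same spirit as the classic TM case in Section~\ref{sec:tm}, verifying each assertion from the defining recursion~\eqref{eq:gen-tm-seq} and standard substitution‐dynamics facts. First I would check that $w$ is a fixed point of $\varrho^{2}$. Since $\varrho$ has constant length $k+\ell$, the one‐sided $v$ is a genuine fixed point of $\varrho$ by construction, and hence of $\varrho^{2}$; the subtlety is only at the seam $\cdots v^{}_{1}v^{}_{0}\,|\,v^{}_{0}v^{}_{1}\cdots$. The letter $1=v^{}_{0}$ satisfies $\varrho(1)=1^{k}\bar 1^{\,\ell}$, which both ends and begins legally adjacent to its own reflection, so the central seed $1|1$ is legal and the two‐sided extension is consistent with $\varrho^{2}$ (one needs $\varrho^{2}$, not $\varrho$, because $\varrho$ moves the origin by $k\neq 0$ in general; the relevant cycle has length $2$ only when $k\equiv\ell$ in the appropriate sense, but $\varrho^{2}$ always fixes $w$). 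For $w'=\varrho(w)$, being the image of a fixed point of $\varrho^{2}$, it is again fixed by $\varrho^{2}$. Local indistinguishability of $w$ and $w'$ follows because both lie in the hull of the primitive substitution: every finite subword of one occurs in the other, since primitivity guarantees that all legal words of bounded length appear in $\varrho^{n}(1)$ for $n$ large, and $\varrho^{n}(1)$ is a common subword of both up to reflection.

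Next I would establish repetitivity: this is the standard consequence of primitivity of $\varrho$ (equivalently, of $\varrho^{2}$), which gives linear repetitivity, hence repetitivity; minimality of $(\XX,\ZZ)$ is then equivalent to repetitivity of $w$, and strict ergodicity (unique ergodicity plus minimality) follows from the classical theorem for primitive substitutions (as cited, \cite{Q,W}), with the unique invariant measure given by cylinder‐set measures equal to the well‐defined word frequencies—these exist and are uniform because primitivity makes the substitution matrix $\bigl(\begin{smallmatrix}k&\ell\\\ell&k\end{smallmatrix}\bigr)$ primitive with a strictly positive Perron eigenvector. For the palindrome claim I would use~\eqref{eq:symm-fix} directly: $w^{}_{-i-1}=v^{}_{i}=w^{}_{i}$ for $i\ge 0$ shows $w$ is symmetric under reflection through the point $-\tfrac12$ between positions $-1$ and $0$; equivalently $w$ contains arbitrarily long palindromic central blocks, so $w$ is an infinite palindrome. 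Reflection symmetry of the hull then follows because the reflected sequence $\widetilde w$ is locally indistinguishable from $w$ (it contains the same finite words), hence lies in $\XX=\mathrm{LI}(w)$, and $\XX$ is invariant under the reflection map.

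Finally, for the continuous (tiling) hull $(\YY,\RR)$: turning $w$ into a tiling of $\RR$ by two unit intervals distinguished by colour, $\YY$ is the suspension of $(\XX,\ZZ)$ with a constant roof function $1$. Strict ergodicity of the suspension is inherited from strict ergodicity of the base under a constant roof: minimality of the $\RR$‐flow is equivalent to minimality of the $\ZZ$‐action together with the roof not being cohomologous to a function with values in a proper subgroup of $\RR$—which is automatic for a constant roof over an infinite minimal system—and unique ergodicity likewise transfers, the unique $\RR$‐invariant measure being the product of the unique $\ZZ$‐invariant measure on $\XX$ with normalised Lebesgue measure on the unit interval. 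I expect the main obstacle to be the bookkeeping at the origin for the fixed‐point and palindrome statements: one must be careful that it is $\varrho^{2}$ (not $\varrho$) that fixes the two‐sided $w$, and that the seam $1|1$ is genuinely legal and consistent under iteration of $\varrho^{2}$; once the seam is handled correctly, everything else is an application of the standard primitive‐substitution toolkit. All the remaining verifications—legality of the central seed, primitivity of the substitution matrix, the suspension argument—are routine and will be stated without lengthy computation.
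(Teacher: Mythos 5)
Your overall route is the paper's route: apart from the seam, everything is delegated to the standard machinery of primitive substitutions (repetitivity and minimality from primitivity, strict ergodicity from uniform patch frequencies, the $\RR$-system as the constant-length suspension), and the paper's proof is similarly brief on those points, adding only an aperiodicity argument via the proximal pair $w,w'$ (equal to the right of the marker, different everywhere to its left) that you omit but that the stated proposition does not strictly require.

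The genuine gap is in the one step where you try to supply substance, the claim that the mirrored word $w$ of Eq.~\eqref{eq:symm-fix} satisfies $\varrho^{2}(w)=w$. Legality of the seed $1|1$, together with $\varrho(v)=v$ and the fact that $\varrho^{2}(1)$ begins and ends with $1$, produces \emph{a} bi-infinite fixed point of $\varrho^{2}$, namely $\lim_{n}\varrho^{2n}(1|1)$, whose left half is the right-aligned limit of the words $\varrho^{2n}(1)$. Identifying this with the reflected word of Eq.~\eqref{eq:symm-fix} requires that $\varrho^{2}$ be compatible with reversal, i.e.\ that the reversed words $\varrho^{2n}(1)$ reproduce the prefixes of $v$ (a palindromicity property of $\varrho^{2}$ on letters, which is what makes the classic case work); your phrase that $\varrho(1)$ ``ends and begins legally adjacent to its own reflection'' does not deliver this, and it is not automatic: already for $(k,\ell)=(2,1)$ one has $\varrho^{2}(1)=11\bar{1}11\bar{1}\bar{1}\bar{1}1$, which is not a palindrome, so this is exactly the point needing an argument rather than a remark. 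The parenthetical claims are also wrong: $\varrho$ does not ``move the origin by $k$'', and the seed cycle $1|1\mapsto\bar{1}|1\mapsto 1|1$ has length $2$ for \emph{all} $k,\ell$, which is the actual reason $\varrho^{2}$ rather than $\varrho$ is used. Since your arguments for $w\in\XX$, for local indistinguishability of $w$ and $w'$, and for reflection symmetry of the hull all presuppose that every patch of $w$, including those straddling the marker, is legal, they inherit this gap. A minor further point: your minimality criterion for the suspension (roof not cohomologous to a subgroup-valued function) is misstated — a constant-roof suspension of a minimal system is automatically minimal — though the conclusion you draw is the standard one the paper uses.
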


\begin{proof}
  Most claims are standard consequences of the theory of substitution
  dynamical systems \cite{Q}, as applied to $\varrho$. The fixed point
  property includes the relation $\varrho^{2} (w) = w$ together with
  the legality of the central seed $w^{}_{-1} | w^{}_{0}$ around the
  marker. Note that each fixed point of $\varrho^{n}$, with arbitrary
  $n\in\NN$, is repetitive and defines the same hull $\XX$.  The
  latter is minimal due to repetitivity, and consists precisely of the
  LI class of $w$. Since $w$ and $w'$ coincide on the right of the
  marker, but differ on the left in every position, neither can have a
  non-trivial period (this would contradict their local
  indistinguishability). This is an easy instance of the existence of
  distinct proximal elements \cite{BO}.  Consequently, $\XX =
  \mathrm{LI} (w)$ cannot contain \emph{any} element with a
  non-trivial period, so that $w$ and hence $\varrho$ is aperiodic.

  The action of the shift is clearly continuous in the product
  topology.  Unique ergodicity follows from the uniform existence of
  all pattern frequencies (or from linear repetitivity). This means
  that $(\XX,\ZZ)$ defines a topological dynamical system that leads
  to a strictly ergodic dynamical system $(\XX,\ZZ,\nu)$, where the
  unique measure $\nu$ is defined via the frequencies of patches as
  the measures of the corresponding cylinder sets. The claim about the
  extension to the $\RR$-action on $\YY$ follows from the suspension
  of the discrete system, which is easy here because the constant
  length of the substitution $\varrho$ implies that the canonically
  attached tiling is the one described.
\end{proof}

Let us mention in passing that the discrete hull $\XX$ is a Cantor set,
while the local structure of the continuous hull $\YY$ is a product of
an interval with a Cantor set; compare \cite{BHZ} and references
therein.

Since each choice of $k,\ell$ leads to a strictly ergodic dynamical
system, we know that all autocorrelation coefficients (as defined by
Eq.~\eqref{eq:auto-coeff}, with $\bar{1}\, \widehat{=} -1$)
exist. Clearly, we always have $\eta(0)=1$, while several
possibilities exist to calculate $\eta(\pm 1) =
\frac{k+\ell-3}{k+\ell+1}$.

As before, we turn a gTM sequence $w=(w_{n})^{}_{n\in\ZZ}$
into the Dirac comb
\begin{equation}\label{eq:comb}
   \omega \,=\, \sum_{n\in\ZZ} w_{n}\,\delta_{n}\ts ,
\end{equation}
which is a translation bounded measure. Its autocorrelation
is of the form
\begin{equation}\label{eq:gamma}
   \gamma \,=\, \sum_{m\in\ZZ} \eta(m)\,\delta_{m}
\end{equation}
with the coefficients $\eta(m)$, which can alternatively be calculated
via the one-sided fixed point $v$ as in Eq.~\eqref{eq:tm-coeff}. Let
us now derive a recursion for $\eta(m)$. Since this will be the
`golden key' for almost all spectral properties, we provide a detailed proof.

\begin{lemma}\label{lem:gen-tm-rec} 
  Consider the gTM sequence defined by the primitive substitution
  $\varrho$ of\/ \eqref{eq:gen-tm-def}, for fixed\/ $k,\ell \in
  \NN$. When realised as the Dirac comb of Equation~\eqref{eq:comb},
  each element of the corresponding hull\/ $\XX = \XX^{}_{k,\ell}$
  possesses the unique autocorrelation\/ $\gamma$ of \eqref{eq:gamma},
  where the autocorrelation coefficients satisfy\/ $\eta(0)=1$ and the
  linear recursion
\[ 
     \eta \bigl( (k\!+\!\ell) m + r\bigr) = \myfrac{1}{k+\ell}
     \bigl( \alpha^{}_{k,\ell,r}\, \eta(m) +
     \alpha^{}_{k,\ell,k+\ell-r} \, \eta(m\!+\!1) \bigr),
\]
   with $\alpha^{}_{k,\ell,r} = k+\ell-r -2 \min
   (k,\ell,r,k\!+\!\ell\!-\!r)$, valid for all $m\in\ZZ$ and 
   $0\le r < k\!+\!\ell$. In particular, one has
   $\eta\bigl((k\!+\!\ell)m\bigr)=\eta(m)$ for $m\in\ZZ$.
\end{lemma}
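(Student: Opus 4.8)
The plan is to work directly with the one-sided expression \eqref{eq:tm-coeff}, namely $\eta(m)=\lim_{N\to\infty}\frac1N\sum_{n=0}^{N-1}v_n v_{n+m}$ (valid for $m\ge 0$, with $\eta(-m)=\eta(m)$ by reflection symmetry of $w$), and to exploit the block structure \eqref{eq:gen-tm-seq} of the fixed point $v$. The key observation is that each index $n\in\{0,1,\dots\}$ can be written uniquely as $n=(k+\ell)p+s$ with $0\le s<k+\ell$, and that $v_n$ equals $v_p$ when $0\le s<k$ and $\bar v_p$ when $k\le s<k+\ell$. Writing $\varepsilon(s)=+1$ if $0\le s<k$ and $\varepsilon(s)=-1$ if $k\le s<k+\ell$, we have $v_n=\varepsilon(s)\,v_p$ as signed weights. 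The idea is to evaluate $\eta\bigl((k+\ell)m+r\bigr)$ by splitting the averaging sum over $n$ according to the residue $s$ of $n$ modulo $k+\ell$; for a fixed residue $s$ the pair $\bigl(v_n,v_{n+(k+\ell)m+r}\bigr)$ becomes $\bigl(\varepsilon(s)v_p,\ \varepsilon(s')v_{p'}\bigr)$ where $p'=p+m$ or $p'=p+m+1$ depending on whether $s+r<k+\ell$ or not, and $s'=(s+r)\bmod(k+\ell)$.

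Carrying this out, the average $\frac1N\sum_n v_nv_{n+(k+\ell)m+r}$ decomposes into $k+\ell$ subsums (one per residue class $s$), each of which is $\frac{1}{k+\ell}$ times an average of the form $\frac1M\sum_p v_pv_{p+m}$ or $\frac1M\sum_p v_pv_{p+m+1}$, weighted by the sign $\varepsilon(s)\varepsilon(s')$. Passing to the limit $N\to\infty$, every such inner average converges to $\eta(m)$ or $\eta(m+1)$ (using $\eta(m+1)=\eta(-m-1)$ when $m+1$ happens to be needed with $m<0$, and that these limits exist by unique ergodicity, already established in Proposition~\ref{prop:hull}). Collecting terms, $\eta\bigl((k+\ell)m+r\bigr)=\frac{1}{k+\ell}\bigl(A_r\,\eta(m)+B_r\,\eta(m+1)\bigr)$, where $A_r=\sum_{s:\,s+r<k+\ell}\varepsilon(s)\varepsilon(s+r)$ counts, with signs, the residues $s$ for which no carry occurs, and $B_r=\sum_{s:\,s+r\ge k+\ell}\varepsilon(s)\varepsilon(s+r-(k+\ell))$ counts the carry cases. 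It remains to identify $A_r=\alpha^{}_{k,\ell,r}$ and $B_r=\alpha^{}_{k,\ell,k+\ell-r}$ with $\alpha^{}_{k,\ell,r}=k+\ell-r-2\min(k,\ell,r,k+\ell-r)$.

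This last identification is the combinatorial heart of the argument and the step I expect to be the main obstacle. One computes $A_r$ as $(k+\ell-r)$ (the number of $s\in\{0,\dots,k+\ell-1\}$ with $s+r<k+\ell$) minus twice the number of sign disagreements among those $s$, i.e.\ twice $\#\{s: 0\le s<k+\ell-r,\ \varepsilon(s)\ne\varepsilon(s+r)\}$. A sign disagreement occurs exactly when $s$ and $s+r$ straddle the threshold $k$, that is $s<k\le s+r$, equivalently $\max(0,k-r)\le s<k$, which has $\min(r,k)$ solutions inside the relevant range (one checks the range constraint $s<k+\ell-r$ is automatic since $s<k\le k+\ell-r$ would require $r\le\ell$; the general count works out to $\min(k,\ell,r,k+\ell-r)$ after combining with the symmetric contribution). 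A short case analysis in $r\lessgtr k$, $r\lessgtr\ell$, $r\lessgtr k+\ell-r$ then yields $A_r=\alpha^{}_{k,\ell,r}$; the computation for $B_r$ is entirely analogous with $r$ replaced by $k+\ell-r$, giving $B_r=\alpha^{}_{k,\ell,k+\ell-r}$. Finally, the special case $r=0$ gives $\alpha^{}_{k,\ell,0}=k+\ell$ and $\alpha^{}_{k,\ell,k+\ell}=-(k+\ell)-2\min(k,\ell,k+\ell,0)=-(k+\ell)$... rather, $\alpha^{}_{k,\ell,k+\ell}=0$ since the carry set is empty when $r=0$, so $\eta\bigl((k+\ell)m\bigr)=\eta(m)$, as claimed. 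Extending validity from $m\ge 0$ to all $m\in\ZZ$ follows by the reflection symmetry $\eta(-j)=\eta(j)$ together with the observation that the stated recursion is compatible with this symmetry (one verifies $\alpha^{}_{k,\ell,r}=\alpha^{}_{k,\ell,k+\ell-r}$ is \emph{not} needed; instead one checks directly that replacing $(m,r)$ by the representation of $-(k+\ell)m-r$ reproduces the same identity).
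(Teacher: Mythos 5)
Your proposal is correct and follows essentially the same route as the paper's proof: both split the defining average for $\eta\bigl((k+\ell)m+r\bigr)$ over the residue class of the index modulo $k+\ell$, use the self-similarity \eqref{eq:gen-tm-seq} to convert each subsum into $\pm\,v_p v_{p+m}$ or $\pm\,v_p v_{p+m+1}$, identify the signed multiplicities (count minus twice the number of sign disagreements) with $\alpha^{}_{k,\ell,r}$ and $\alpha^{}_{k,\ell,k+\ell-r}$, and extend to $m<0$ via $\eta(-j)=\eta(j)$ with the swap $(m,r)\mapsto(|m|-1,\,k+\ell-r)$. Your bookkeeping via $\varepsilon(s)\varepsilon(s')$ is just a compact repackaging of the paper's Table of cases, and the small slip in evaluating $\alpha^{}_{k,\ell,k+\ell}$ is self-corrected and harmless.
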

\begin{proof}
  Existence and uniqueness of $\gamma$ are a consequence of
  Proposition~\ref{prop:hull}, via an application of the ergodic
  theorem. The support of the positive definite measure $\gamma$ is
  obviously contained in $\ZZ$, so that $\gamma = \eta\ts
  \delta^{}_{\ZZ}$.

  Since $\eta(0)=1$ is immediate from Eq.~\eqref{eq:auto-coeff}, it
  remains to derive the recursion. We begin with $m\ge 0$ and use
  formula~\eqref{eq:tm-coeff}. When $r=0$, one finds
\[
\begin{split}
   \eta\bigl((k\!+\!\ell)m\bigr)\, & = 
   \lim_{N\to\infty}\frac{1}{N}\sum_{n=0}^{N-1} v_{n}\ts v_{n+(k+\ell)m}\, = 
   \lim_{N'\to\infty}\frac{1}{(k+\ell)N'}\sum_{t=0}^{k+\ell-1}
   \sum_{n=0}^{N'-1} v^{}_{(k+\ell)n+t}\, v^{}_{(k+\ell)(n+m)+t}\\
    & \!\!\stackrel{\eqref{eq:gen-tm-seq}}{=}\!\!      
   \lim_{N'\to\infty}\frac{1}{(k+\ell)N'}\sum_{t=0}^{k+\ell-1}
   \sum_{n=0}^{N'-1} \left\{\begin{array}{@{}ll@{}}
    v_{n}\ts v_{n+m} \ts ,& \text{if $0\le t<k$} \\ 
    \bar{v}_{n}\ts \bar{v}_{n+m}\ts , & \text{if $k\le t<k\!+\!\ell$}  
   \end{array}\right\}\\
   &= \lim_{N'\to\infty}\frac{1}{(k+\ell)N'}\sum_{t=0}^{k+\ell-1}
   \sum_{n=0}^{N'-1} v_{n}\ts v_{n+m} \, =\,  \eta(m)\ts ,
\end{split}
\]
  where the penultimate step follows because $\bar{v}_{n}\ts
  \bar{v}_{n+m}= v_{n}\ts v_{n+m}$ due to $v_{i}\in\{\pm 1\}$.

  For general $r$, one proceeds analogously and finds
\begin{equation}\label{eq:gen-case}
   \eta\bigl((k\!+\!\ell)m+r\bigr)\: = 
   \lim_{N'\to\infty}\frac{1}{(k+\ell)N'}\sum_{t=0}^{k+\ell-1}
   \sum_{n=0}^{N'-1} v^{}_{(k+\ell)n+t}\, v^{}_{(k+\ell)(n+m)+r+t}\, .
\end{equation}
 One now has to split the sum over $t$ according to the ten cases
 of Table~\ref{tab:cases}. In each row, the condition for $t$ is
 formulated in such a way that the difference of the bounds gives the
 proper multiplicity of the resulting term to the sum, which may be
 zero in some cases.

\begin{table}
\caption{\label{tab:cases}Conditions on $r$ and $t$ for the splitting
  of the sum in Eq.~\eqref{eq:gen-case}, and the resulting term for the sum
  from the recursion relation 
\eqref{eq:gen-tm-seq}.}\renewcommand{\arraystretch}{1.2}
\centerline{\begin{tabular}{|c|r@{$\:\le t<\:$}l|l|} \hline
    \multicolumn{3}{|c|}{Conditions} &
    \multicolumn{1}{c|}{Term}\\ \hline & $0$ & $k \! - \!r$ &
    $v_{n}\ts v_{n+m}$ \\ & $k\! -\! r$ & $\min(k,k\! +\! \ell \!-
    \!r)$ & $v_{n}\ts \bar{v}_{n+m}$ \\ $0\le r<k$ & $\min(k,k\!
    +\!\ell\!  -\!r) $&$ k$ & $v_{n}\ts v_{n+m+1}$ \\ \cline{2-4} & $k
    $&$ \max(k,k\! +\! \ell\! -\! r)$ & $\bar{v}_{n}\ts \bar{v}_{n+m}$
    \\ & $\max(k,k\! +\! \ell\! -\! r)$&$ k\! +\!\ell$ & 
    $\bar{v}_{n}\ts v_{n+m+1}$ 
    \\ \hline & $0$ & $\min(k,k\! +\!\ell\! -\! r)$ &
    $v_{n}\ts \bar{v}_{n+m}$ \\ & $\min(k,k\! +\! \ell\! -\! r)$ & $k$
    & $v_{n}\ts v_{n+m+1}$ \\ \cline{2-4} $k\le r< k\! +\! \ell$ & $k$
    & $\max(k,k\! +\! \ell\! -\! r)$ & $\bar{v}_{n}\ts \bar{v}_{n+m}$
    \\ & $\max(k,k\! +\! \ell\! -\! r)$ & $2k\! +\! \ell\! -\! r$ &
    $\bar{v}_{n}\ts v_{n+m+1}$ \\ & $2k\!  +\! \ell\! -\! r$ & $ k\!
    +\! \ell $ & $\bar{v}_{n}\ts \bar{v}_{n+m+1}$ \\ \hline
\end{tabular}\renewcommand{\arraystretch}{1}}
\end{table}
  
Observing $\bar{v}_{n} = - v_{n}$, one simply has to add the 
terms of the form $v_{n} \ts v_{n+m}$ with their signed multiplicities,
which contribute to $\eta(m)$, and those of the form $v_{n} \ts
v_{n+m+1}$, which contribute to $\eta(m\!+\!1)$. For instance, when
$0\le r <k$, one finds the multiplicity of $v_{n}\ts v_{n+m}$ as
\[
\begin{split}
   (k\!-\!r)&  - \bigl(\min(k,k\!+\!\ell\!-\!r)-(k\!-\!r)\bigr)
   + \bigl(\max(k,k\!+\!\ell\!-\!r)-k\bigr)\\
 & =\, k+\ell-r-2\bigl(\min(k,k\!+\!\ell\!-\!r)-k+r\bigr) 
    \, =\,  k+\ell-r-2\min(r,\ell)\\
 & =\, k+\ell-r-2\min(k,\ell,r,k\!+\!\ell\!-\!r) \, =\,  \alpha^{}_{k,\ell,r}
\end{split}
\]
where we used $\min(a,b)+\max(a,b)=a+b$ and, in the last line, the
inequality $0\le r <k$. The required denominator $(k+\ell)$ in the
claimed recursion emerges from the splitting as shown above in
Eq.~\eqref{eq:gen-case}.  Likewise, the multiplicity for
$v_{n}\ts v_{n+m+1}$ calculates as
\[
\begin{split}
  \bigl(k-\min(&k,k\!+\!\ell\!-\!r)\bigr) - \bigl((k\!+\!\ell)-
  \max(k,k\!+\!\ell\!-\!r)\bigr)\\
  &=\, -\ell+\max(k,k\!+\!\ell\!-\!r)-\min(k,k\!+\!\ell\!-\!r)\\
 & =\, 2k-r - 2\min(k,k\!+\!\ell\!-\!r) \,=\, r - 2\min(r,\ell)
 \, = \, \alpha^{}_{k,\ell,k+\ell-r}\ts ,
\end{split}
\]
where we used that $\min(r,\ell)= \min(k,\ell,r,k \!+\!
\ell\!-\!r)$ holds in this case.  The remaining cases (for $k\le r<k
\!+\!\ell$) follow from similar calculations; see
Table~\ref{tab:cases}.  This completes the argument for $\eta(m)$ with
$m\ge 0$.

Finally, we know that $\eta(-n)=\eta(n)$ for all $n\in\ZZ$. Let $m<0$
be fixed, so that $m=-\lvert m \rvert$ with $\lvert m \rvert > 0$. If
$r=0$, one simply has
\[
   \eta \bigl( (k\! + \! \ell) m \bigr) \, = \,
   \eta \bigl( (k\! + \! \ell) \lvert m \rvert \bigr) \, = \,
   \eta (\lvert m \rvert ) \, = \, \eta (m) \ts .
\]
When $1 \le r < k\!+\!\ell$, set $m'=\lvert m \rvert - 1$ and $s=
k+\ell-r$, so that $m'\ge 0$ and $1\le s < k\!+\!\ell$. Then, one
finds
\[
\begin{split}
    \eta \bigl( & (k\! + \! \ell) m + r \bigr) \, = \,
     \eta \bigl( (k\! + \! \ell) \lvert m \rvert - r \bigr) \, = \,
   \eta \bigl( (k\! + \! \ell) m' + s \bigr) \\
   & = \, \myfrac{1}{k+\ell} \bigl( \alpha^{}_{k,\ell,s} \,
   \eta(m') + \alpha^{}_{k,\ell,k+\ell-s}\, \eta(m' \! + \!1) \bigr) 
   \, = \,\myfrac{1}{k+\ell} \bigl( \alpha^{}_{k,\ell,r} \,
   \eta(m) + \alpha^{}_{k,\ell,k+\ell-r}\, \eta(m\! + \! 1)\bigr)
\end{split}
\]
due to the reflection symmetry of $\eta$ together with the recursion
for positive arguments.
\end{proof}

Note that the recursion in Lemma~\ref{lem:gen-tm-rec} uniquely determines
all coefficients $\eta\ts (m)$ once $\eta\ts (0)$ is given. Moreover,
the recursion is linear, which will have strong consequences later.

Since, for any given $k,\ell\in\NN$, every member of the corresponding
hull of weighted Dirac combs has the same autocorrelation measure, we
speak, from now on, of the autocorrelation measure of the gTM system
(for parameters $k,\ell \in \NN$). Let us define
$\varSigma(N)=\sum_{n=-N}^{N} \bigl(\eta(n)\bigr)^2$, where we
suppress the parameters $k$ and $\ell$. For $k=\ell=1$, we know a
bound on the growth rate of $\varSigma(N)$, namely $\Sigma(4N)\le
\frac{3}{2} \Sigma(2N)$, from \cite{Kaku,BG08}. For $k + \ell>2$, we
formulate a similar result (with a technically more involved but
structurally slightly simpler proof) as follows.

\begin{lemma}\label{lem:growth}
  Let $k,\ell\in\NN$ with $k\!+\!\ell>2$ be fixed, and let\/ $\eta(n)$
  with\/ $n\in\ZZ$ be the corresponding autocorrelation coefficients
  from Lemma~$\ref{lem:gen-tm-rec}$. Then, there is some positive
  number $q<k\!+\!\ell$ such that $\varSigma \bigl((k\!+\!\ell)N
  \bigr)\le q\varSigma(N)$ for all $N\in\NN$.
\end{lemma}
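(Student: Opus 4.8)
The plan is to bound $\varSigma\bigl((k+\ell)N\bigr)$ directly by using the recursion of Lemma~\ref{lem:gen-tm-rec} to express each $\eta\bigl((k+\ell)m+r\bigr)^2$ in terms of $\eta(m)^2$, $\eta(m+1)^2$ and the cross term $\eta(m)\eta(m+1)$, and then to estimate the cross term by the elementary inequality $2\lvert\eta(m)\eta(m+1)\rvert\le\eta(m)^2+\eta(m+1)^2$. First I would write, for fixed $m$,
\[
   \sum_{r=0}^{k+\ell-1}\eta\bigl((k+\ell)m+r\bigr)^2
   \;=\;\frac{1}{(k+\ell)^2}\sum_{r=0}^{k+\ell-1}
   \bigl(\alpha^{}_{k,\ell,r}\,\eta(m)+\alpha^{}_{k,\ell,k+\ell-r}\,\eta(m+1)\bigr)^2 .
\]
Expanding the square and summing over $r$ produces coefficients $A:=\sum_r\alpha_{k,\ell,r}^2$ (which also equals $\sum_r\alpha_{k,\ell,k+\ell-r}^2$, since $r\mapsto k+\ell-r$ permutes the index set $\{0,\dots,k+\ell-1\}$ up to the harmless term $r=0$, where $\alpha_{k,\ell,0}=k+\ell$ and $\alpha_{k,\ell,k+\ell}$ does not occur — I will have to handle this boundary index with a touch of care) and $B:=\sum_r\alpha_{k,\ell,r}\,\alpha_{k,\ell,k+\ell-r}$ for the cross term. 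Using $2\eta(m)\eta(m+1)\le\eta(m)^2+\eta(m+1)^2$ when $B\ge 0$ (and the reverse bound when $B<0$, which only helps), this gives
\[
   \sum_{r=0}^{k+\ell-1}\eta\bigl((k+\ell)m+r\bigr)^2
   \;\le\;\frac{A+\lvert B\rvert}{(k+\ell)^2}\,\bigl(\eta(m)^2+\eta(m+1)^2\bigr).
\]

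Next I would sum this over the relevant range of $m$. Summing over $n$ from $-(k+\ell)N$ to $(k+\ell)N$ amounts to summing over all pairs $((k+\ell)m+r)$ with $\lvert(k+\ell)m+r\rvert\le(k+\ell)N$; writing $n=(k+\ell)m+r$ with $0\le r<k+\ell$, this is covered by letting $m$ range over roughly $-N-1\le m\le N$, so that each $\eta(m)^2$ with $\lvert m\rvert\le N$ appears at most twice (once as the "$\eta(m)$" term for the block indexed by $m$, once as the "$\eta(m+1)$" term for the block indexed by $m-1$), plus a bounded boundary contribution. Hence
\[
   \varSigma\bigl((k+\ell)N\bigr)\;\le\;\frac{2(A+\lvert B\rvert)}{(k+\ell)^2}\,\varSigma(N)\;+\;C
\]
for a constant $C$ absorbing the $O(1)$ overhang at $m=\pm(N+1)$; since $\varSigma$ is non-decreasing and $\varSigma(1)\ge\eta(0)^2=1>0$, the additive constant can be folded into the multiplicative one at the cost of enlarging it slightly, giving $\varSigma\bigl((k+\ell)N\bigr)\le q\,\varSigma(N)$ with $q:=2(A+\lvert B\rvert)/(k+\ell)^2+C'$.

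The crux, and the main obstacle, is to verify that $q<k+\ell$, i.e. that
\[
   A+\lvert B\rvert\;<\;\tfrac12(k+\ell)^3
\]
(with a little room to spare for the boundary constant). This is a purely combinatorial estimate on the explicit quantities $\alpha_{k,\ell,r}=k+\ell-r-2\min(k,\ell,r,k+\ell-r)$. I expect the clean way is to observe that $\lvert\alpha_{k,\ell,r}\rvert\le k+\ell-r$ and, more importantly, that $\alpha_{k,\ell,r}$ has the same parity as $k+\ell-r$ but is strictly smaller in absolute value for the "middle" values of $r$ (those with $\min(k,\ell)\le$ something), which forces $A$ to be strictly below the value $\sum_r(k+\ell-r)^2=\sum_{j=1}^{k+\ell}j^2$ it would have if no cancellation occurred. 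Combined with a Cauchy–Schwarz-type bound $\lvert B\rvert\le A$, one obtains $A+\lvert B\rvert\le 2A$, and then it suffices to check $2A<\tfrac12(k+\ell)^3$, equivalently $A<\tfrac14(k+\ell)^3$. Since $\sum_{j=1}^{k+\ell}j^2=\tfrac{(k+\ell)(k+\ell+1)(2(k+\ell)+1)}{6}\sim\tfrac13(k+\ell)^3$, which is \emph{larger} than $\tfrac14(k+\ell)^3$, the naive bound is not quite enough and one genuinely needs the cancellation in $\alpha_{k,\ell,r}$ (equivalently, the term $-2\min(k,\ell,r,k+\ell-r)$) to beat the threshold; quantifying this cancellation — splitting the range of $r$ according to which of $k,\ell,r,k+\ell-r$ achieves the minimum, and bounding $\alpha_{k,\ell,r}^2$ on each piece — is the technical heart of the argument, and is where the hypothesis $k+\ell>2$ enters (for $k=\ell=1$ one has $\alpha=(\,-1\,)$ for $r=1$ and the special bound $q=3/2$ of \cite{Kaku,BG08}, corresponding to $N\mapsto 2N$ rather than $N\mapsto N$).
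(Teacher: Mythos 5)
Your overall strategy --- insert the recursion of Lemma~\ref{lem:gen-tm-rec}, expand the squares, absorb the cross terms via $2\lvert\eta(m)\eta(m+1)\rvert\le\eta(m)^{2}+\eta(m+1)^{2}$, and reduce everything to a combinatorial inequality for the $\alpha^{}_{k,\ell,r}$ --- is the same as the paper's, but there is a genuine gap exactly at the step you yourself flag as the technical heart: the combinatorial inequality is not proved, and the route you sketch for it cannot work. Your reduction via $\lvert B\rvert\le A$ requires $A=\sum_{r=0}^{k+\ell-1}\alpha_{k,\ell,r}^{2}<\frac{1}{4}(k+\ell)^{3}$, and this is false for unbalanced parameters: for $(k,\ell)=(12,1)$ one has $\alpha^{}_{12,1,0}=13$ and $\alpha^{}_{12,1,r}=11-r$ for $1\le r\le 12$, so $A=169+386=555>\frac{13^{3}}{4}=549.25$; more generally, for $\ell=1$ and $k$ large one finds $A\sim\frac{1}{3}(k+\ell)^{3}$ and $B\sim\frac{1}{6}(k+\ell)^{3}$, so $A+\lvert B\rvert$ sits essentially at the threshold $\frac{1}{2}(k+\ell)^{3}$ while $2A$ clearly exceeds it. Hence the cross coefficient $B$ must be kept as it is rather than dominated by $A$, and even then your target inequality holds only with a margin of order $(k+\ell)^{2}$, i.e.\ the resulting $q$ is within an $O(1)$ distance of $k+\ell$. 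That margin is then destroyed by your other two sources of slack: you double the $r=0$ block (whose contribution is exactly $\varSigma(N)$, not a bound worth $2\varSigma(N)$), and you fold an additive boundary constant of size roughly $(A+\lvert B\rvert)/(k+\ell)^{2}\approx\frac{1}{2}(k+\ell)$ into $q$, which by itself exceeds that $O(1)$ margin.

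The paper avoids all of this with exact bookkeeping and a much weaker combinatorial input. The decomposition has no overhang: the integers $\lvert n\rvert\le(k+\ell)N$ are precisely $n=(k+\ell)N$ together with $n=(k+\ell)m+r$ for $-N\le m\le N-1$ and $0\le r<k+\ell$; since $\eta\bigl((k+\ell)m\bigr)=\eta(m)$, the $r=0$ terms plus the single leftover term reproduce $\varSigma(N)$ exactly, with coefficient one and no additive constant. For the remaining indices one only needs
$\sum_{r=1}^{k+\ell-1}\bigl(\lvert\alpha^{}_{k,\ell,r}\rvert+\lvert\alpha^{}_{k,\ell,k+\ell-r}\rvert\bigr)^{2}<(k+\ell-1)(k+\ell)^{2}$,
which follows at once from $\lvert\alpha^{}_{k,\ell,r}\rvert\le k+\ell-r$ (so each pair sum is at most $k+\ell$, cf.\ Eq.~\eqref{eq:alsum}) together with the single strict improvement at $r=1$, where $\lvert\alpha^{}_{k,\ell,1}\rvert+\lvert\alpha^{}_{k,\ell,k+\ell-1}\rvert\le k+\ell-2$ precisely because $k+\ell>2$. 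If you treat the $r=0$ block exactly and work with the pair sums $\lvert\alpha^{}_{k,\ell,r}\rvert+\lvert\alpha^{}_{k,\ell,k+\ell-r}\rvert$ instead of splitting into $A$ and $B$, your argument closes along these lines.
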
 
\begin{proof}
  Using the recursions of Lemma~\ref{lem:gen-tm-rec}, one finds
\begin{align*}
   \varSigma\bigl((k\!+\!\ell)N\bigr) &= \!\!
   \sum_{n=-(k+\ell)N}^{(k+\ell)N} \!\!\bigl(\eta(n)\bigr)^{2} 
   \;=\; \bigl(\eta\bigl((k\!+\!\ell)N\bigr)\bigr)^{2} + 
   \sum_{r=0}^{k+\ell-1} \sum_{m=-N}^{N-1} 
   \bigl(\eta\bigl((k\!+\!\ell)m+r\bigr)\bigr)^{2} \\
   &= \varSigma(N) \,+\, \frac{1}{(k+\ell)^{2}}\sum_{r=1}^{k+\ell-1} 
   \sum_{m=-N}^{N-1} \bigl(\alpha^{}_{k,\ell,r}\,\eta(m) + 
   \alpha^{}_{k,\ell,k+\ell-r}\,\eta(m\!+\!1)\bigr)^{2}\\
   &\le \frac{\varSigma(N)}{(k+\ell)^{2}}\biggl( (k+\ell)^{2} +\! 
    \sum_{r=1}^{k+\ell-1} \bigl(\alpha_{k,\ell,r}^{2} + 
    \alpha_{k,\ell,k+\ell-r}^{2}\bigr)\biggr) \,+\, 
    \frac{A}{(k+\ell)^{2}}  
\end{align*}
with $A=\bigl(\sum_{m=-N}^{N-1}2\,\lvert\ts\eta(m)\ts\eta(m\!+\!1)
\rvert\bigr) \bigl(\sum_{r=1}^{k+\ell-1}\lvert\alpha^{}_{k,\ell,r}\,
\alpha^{}_{k,\ell,k+\ell-r}\rvert\bigr)$ being a sum of non-negative
terms only. Noting that
\[
   \sum_{m=-N}^{N-1} 2\,\lvert\ts\eta(m)\ts\eta(m\!+\!1)\rvert \,\;\le 
   \sum_{m=-N}^{N-1}\bigl(\eta(m)\bigr)^{2} + \bigl(\eta(m\!+\!1)\bigr)^{2}
   \;\le\; 2\,\varSigma(N)\, ,
\]
one obtains $A\le \varSigma(N) \sum_{r=1}^{k+\ell-1}2\,
\lvert\alpha^{}_{k,\ell,r}\, \alpha^{}_{k,\ell,k+\ell-r}
\rvert$. Employing the binomial formula results in
\[
  \varSigma\bigl((k\!+\!\ell)N\bigr) \;\le\;
  \frac{\varSigma(N)}{(k+\ell)^{2}}\,\biggl((k+\ell)^{2} +\! 
  \sum_{r=1}^{k+\ell-1} \bigl(\lvert\alpha^{}_{k,\ell,r}\rvert +
  \lvert\alpha^{}_{k,\ell,k+\ell-r}\rvert\bigr)^{2}\biggr).
\]
Our claim follows if we show that the term in the large brackets is
smaller than $(k+\ell)^{3}$. For $1\le r\le k+\ell-1$, we know that
$1\le\min(k,\ell,r,k+\ell-r)\le \min(r,k+\ell-r)$, which implies
$\lvert\alpha^{}_{k,\ell,r}\rvert\le k+\ell-r$ and hence
\begin{equation}\label{eq:alsum}
   \lvert\alpha^{}_{k,\ell,r}\rvert + \lvert\alpha^{}_{k,\ell,k+\ell-r}\rvert
   \:\le\: k+\ell\, .
\end{equation}
Since $k + \ell>2$ by assumption, the stronger inequality
$\lvert\alpha^{}_{k,\ell,1}\rvert + \lvert \alpha^{}_{k,\ell,k+\ell-1}
\rvert\le k+\ell-2$ holds for $r=1$, so that at least one term in the
sum is smaller than $k+\ell$. This means that a $q<k\!+\!\ell$ exists
such that $\varSigma\bigl((k\!+\!\ell)N\bigr)\le q\varSigma(N)$ holds
for all $N\ge 1$.
\end{proof}

The recursion derived in Lemma~\ref{lem:gen-tm-rec} can now be used to
show the absence of pure point components (by Wiener's criterion,
which will rely on Lemma~\ref{lem:growth}) as well as that of
absolutely continuous components (by the Riemann-Lebesgue lemma, which
will rely on the special relation $\eta\bigl((k\!+\!\ell)m\bigr)=\eta(m)$
from Lemma~\ref{lem:gen-tm-rec}), thus establishing that each sequence
in this family leads to a signed Dirac comb with purely singular
continuous diffraction.

\begin{theorem} \label{thm:gen-tm}
   Let $k,\ell\in\NN$. The diffraction measure of the gTM
   substitution $\varrho=\varrho^{}_{k,\ell}$ is the Fourier
   transform $\widehat{\gamma}$ of the autocorrelation measure
   $\gamma$ of Lemma~$\ref{lem:gen-tm-rec}$. It is the diffraction
   measure of every element of the hull of weighted Dirac combs for
   $\varrho$.  Moreover, $\widehat{\gamma}$ is purely singular
   continuous.
\end{theorem}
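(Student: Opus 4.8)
\section*{Proof proposal}

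The plan is to follow the three-step pattern used for the classical Thue--Morse
measure in Section~\ref{sec:tm}, now driven by Lemmas~\ref{lem:gen-tm-rec} and
\ref{lem:growth}. Recall the setting: $\gamma$ is a positive definite, translation
bounded measure supported on $\ZZ$, so $\widehat\gamma$ is a positive, translation
bounded, $\ZZ$-periodic measure, which we write as
$\widehat\gamma=\mu*\delta^{}_{\ZZ}$ with $\mu$ a positive measure on the circle
$\mathbb{T}=\RR/\ZZ$ whose Fourier coefficients are $\widehat\mu(n)=\eta(n)$;
in particular $\mu(\mathbb{T})=\eta(0)=1$, so $\mu\ne 0$ and $\widehat\gamma\ne 0$.
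Let $\mu=\mu^{}_{\mathsf{pp}}+\mu^{}_{\mathsf{sc}}+\mu^{}_{\mathsf{ac}}$ be the
Lebesgue decomposition, with Fourier coefficient sequences
$\eta^{}_{\mathsf{pp}},\eta^{}_{\mathsf{sc}},\eta^{}_{\mathsf{ac}}$. For
$k=\ell=1$ the assertion is the classical result recalled in
Section~\ref{sec:tm}, so I assume $k+\ell>2$ from now on, which is what
Lemma~\ref{lem:growth} requires.

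\emph{No pure point part.} Iterating Lemma~\ref{lem:growth} gives
$\varSigma\bigl((k\!+\!\ell)^{j}\bigr)\le q^{j}\varSigma(1)$ for all $j\in\NN$,
with $0<q<k+\ell$ fixed. Since $\varSigma$ is non-decreasing, for $N$ with
$(k\!+\!\ell)^{j}\le N<(k\!+\!\ell)^{j+1}$ one gets
$\varSigma(N)/N\le (k\!+\!\ell)\,\varSigma(1)\,\bigl(q/(k\!+\!\ell)\bigr)^{j+1}$,
which tends to $0$ as $N\to\infty$. Hence
$\tfrac{1}{2N+1}\sum_{n=-N}^{N}\eta(n)^{2}\to 0$, and Wiener's criterion (see
Section~\ref{sec:tm}) yields $\mu^{}_{\mathsf{pp}}=0$, so $\widehat\gamma$ is
continuous.

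\emph{No absolutely continuous part.} The crucial point, exactly as for
$\varrho^{}_{\mathrm{TM}}$, is that the linear recursion of
Lemma~\ref{lem:gen-tm-rec} is equivalent to a finite system of functional
relations for $\mu$ that expresses $\mu$ through its push-forwards under the
expanding maps $x\mapsto(k\!+\!\ell)x+j$ (mod $\ZZ$), $0\le j<k\!+\!\ell$, with
trigonometric-polynomial coefficients --- this is precisely the Riesz-product
structure derived in Section~\ref{sec:gtm-diffract}. Each ingredient of these
relations, namely push-forward under an expanding circle map and multiplication by
a continuous function, carries absolutely continuous measures to absolutely
continuous ones and singular measures to singular ones, because
$x\mapsto(k\!+\!\ell)x$ sends Lebesgue null sets to Lebesgue null sets in both
directions. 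By uniqueness of the Lebesgue decomposition,
$\mu^{}_{\mathsf{ac}}$ (and likewise $\mu^{}_{\mathsf{sc}}$) satisfies the same
system, i.e.\ the recursion of Lemma~\ref{lem:gen-tm-rec} holds verbatim for the
coefficients $\eta^{}_{\mathsf{ac}}(m)$, with some initial value
$\eta^{}_{\mathsf{ac}}(0)$; in particular
$\eta^{}_{\mathsf{ac}}\bigl((k\!+\!\ell)m\bigr)=\eta^{}_{\mathsf{ac}}(m)$ for all
$m$. Writing $\mu^{}_{\mathsf{ac}}=g\,\lambda$ with $g\in L^{1}(\mathbb{T})$ and
$\lambda$ normalised Lebesgue measure, the Riemann--Lebesgue lemma gives
$\eta^{}_{\mathsf{ac}}(n)=\widehat g(n)\to 0$ as $\lvert n\rvert\to\infty$; hence,
for every $m\ne 0$, $\eta^{}_{\mathsf{ac}}(m)=\eta^{}_{\mathsf{ac}}\bigl((k\!+\!\ell)^{j}m\bigr)\xrightarrow{\,j\to\infty\,}0$,
so $\eta^{}_{\mathsf{ac}}(m)=0$. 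Finally, the recursion at $m=0$ and $r=k\!+\!\ell-1$
reads $\eta^{}_{\mathsf{ac}}(k\!+\!\ell-1)=\tfrac{1}{k+\ell}\bigl(\alpha^{}_{k,\ell,k+\ell-1}\,\eta^{}_{\mathsf{ac}}(0)+\alpha^{}_{k,\ell,1}\,\eta^{}_{\mathsf{ac}}(1)\bigr)$;
here $\eta^{}_{\mathsf{ac}}(k\!+\!\ell-1)=\eta^{}_{\mathsf{ac}}(1)=0$ (both indices
are non-zero), while
$\alpha^{}_{k,\ell,k+\ell-1}=1-2\min(k,\ell,1,k\!+\!\ell-1)=-1$ for all
$k,\ell\in\NN$ (the minimum equals $1$ since $k,\ell\ge 1$ and $k\!+\!\ell\ge 2$),
so $\eta^{}_{\mathsf{ac}}(0)=0$ as well. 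Thus every Fourier coefficient of
$\mu^{}_{\mathsf{ac}}$ vanishes, $\mu^{}_{\mathsf{ac}}=0$, and
$\bigl(\widehat\gamma\bigr)_{\mathsf{ac}}=0$.

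Putting the pieces together, $\widehat\gamma=\mu^{}_{\mathsf{sc}}*\delta^{}_{\ZZ}$
is continuous, singular, and non-zero (because $\mu(\mathbb{T})=1$), i.e.\ purely
singular continuous. I expect the middle step to be the real obstacle: one must
make rigorous the passage from the arithmetic recursion of
Lemma~\ref{lem:gen-tm-rec} to a system of functional relations for
$\widehat\gamma$ that is manifestly preserved by the Lebesgue decomposition, which
is exactly the task of the Riesz-product analysis in
Section~\ref{sec:gtm-diffract}; by comparison, the Wiener input, the
Riemann--Lebesgue input, and the elementary identity
$\alpha^{}_{k,\ell,k+\ell-1}=-1$ are routine.
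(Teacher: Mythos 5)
Your proposal is correct and follows essentially the same route as the paper: Wiener's criterion fed by Lemma~\ref{lem:growth} removes the pure point part, and the recursion of Lemma~\ref{lem:gen-tm-rec}, recast as functional relations for $\mu$ that are respected by the Lebesgue decomposition (mutual singularity plus type preservation under the affine branch maps and multiplication by $\vartheta$), combined with the Riemann--Lebesgue lemma and $\eta^{}_{\mathsf{ac}}\bigl((k\!+\!\ell)m\bigr)=\eta^{}_{\mathsf{ac}}(m)$, removes the absolutely continuous part. The only (harmless) deviation is that you pin down $\eta^{}_{\mathsf{ac}}(0)=0$ uniformly via $r=k+\ell-1$, where $\alpha^{}_{k,\ell,k+\ell-1}=-1$ for all $k,\ell$, whereas the paper uses $r=1$ and must treat $k+\ell=3$ separately with $r=2$; your variant simply avoids that case distinction.
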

\begin{proof}
  Since the statement is clear for $k=\ell=1$ from
  Section~\ref{sec:tm} together with \cite{Kaku,BG08}, let $k\!+\!\ell>2$
  be fixed. The corresponding autocorrelation is unique by
  Lemma~\ref{lem:gen-tm-rec}. Since it is positive definite by
  construction, its Fourier transform exists \cite{BF}, and then
  applies to each element of the hull again. Since the underlying
  Dirac comb is supported on $\ZZ$, we know from \cite[Thm.~1]{B} that
  $\widehat{\gamma}$ is $\ZZ\ts$-periodic, hence it can be written as
\begin{equation} \label{eq:conv-rep}
    \widehat{\gamma}  =  \mu * \delta^{}_{\ZZ}
    \quad \text{with} \quad \mu = \widehat{\gamma}\ts |^{}_{[0,1)} \ts .
\end{equation}
Here, $\mu$ is a positive measure on the unit interval (which is a
representation of the unit circle here), so that the inverse Fourier
transform $\check{\mu}$, by the Herglotz-Bochner theorem, is a
(continuous) positive definite function on $\ZZ$ (viewed as the dual
group of the unit circle). Since $\gamma = \check{\mu} \,
\delta^{}_{\ZZ}$ by the convolution theorem together with the Poisson
summation formula $\widehat{\delta_{\ZZ}} = \delta_{\ZZ}$, we see that
this function is
\begin{equation}\label{eq:eta-versus-mu}
    \eta (m) \, = \, \int_{0}^{1} e^{2 \pi i m x} \dd \mu (x)
    \, = \, \check{\mu} (m) \ts .
\end{equation}
  Let us now use the recursion for $\eta$ to infer the spectral nature
  of $\mu$ and thus of $\widehat{\gamma}$.

  Lemma~\ref{lem:growth} implies $\frac{1}{N}\Sigma(N)
  \xrightarrow{\,N\to\infty\,} 0$, which means
  $\bigl(\widehat{\gamma}\bigr)_{\mathsf{pp}}=0$ by Wiener's criterion
  \cite{Wie}; see also \cite{Katz,Nad}. We thus know that
  $\widehat{\gamma} = \bigl(\widehat{\gamma}\bigr)_{\mathsf{sc}}+
  \bigl(\widehat{\gamma}\bigr)_{\mathsf{ac}}$ is a continuous measure,
  where the right-hand side is the sum of two positive measures that
  are mutually orthogonal (in the sense that they are concentrated on
  disjoint sets). Each is the Fourier transform of a positive definite
  measure with support $\ZZ$, hence specified by autocorrelation
  coefficients $\eta^{}_{\mathsf{sc}}$ and $\eta^{}_{\mathsf{ac}}$
  which clearly satisfy $\eta^{}_{\mathsf{sc}}(m) +
  \eta^{}_{\mathsf{ac}}(m) = \eta(m)$ for all $m\in\ZZ$. The recursion
  relations for $\eta$ imply a corresponding set of functional
  relations for the non-decreasing and continuous distribution
  function $F$ defined by $F(x) = \widehat{\gamma} \bigl([0,x]\bigr)$
  for $0\le x \le 1$. Due to the orthogonality mentioned above, the
  same relations have to be satisfied by the $\mathsf{ac}$ and
  $\mathsf{sc}$ parts separately. This in turn implies that
  $\eta^{}_{\mathsf{sc}}$ and $\eta^{}_{\mathsf{ac}}$ must both
  satisfy the recursion relations of Lemma~\ref{lem:gen-tm-rec},
  however with a yet undetermined value of $\eta^{}_{\mathsf{ac}}(0)$,
  and $\eta^{}_{\mathsf{sc}} (0) = 1 - \eta^{}_{\mathsf{ac}}(0)$.
  
  The recursion of Lemma~\ref{lem:gen-tm-rec} with $m=0$ and $r=1$ gives
\[
     \eta^{}_{\mathsf{ac}}(1) \,=\, \frac{k+\ell-3}{k+\ell+1}\,
     \eta^{}_{\mathsf{ac}}(0)\ts ,
\]
while $r=0$ leads to $\eta^{}_{\mathsf{ac}}\bigl((k\!+\!\ell)m\bigr)=
\eta^{}_{\mathsf{ac}}(m)$ for all $m\in\ZZ$. Since we have
$\lim_{n\to\infty}\eta^{}_{\mathsf{ac}}(n)=0$ from the
Riemann-Lebesgue lemma, compare \cite{Katz}, we must have
$\eta^{}_{\mathsf{ac}}(m)=0$ for all $m>0$. When $k + \ell>3$,
$\eta^{}_{\mathsf{ac}}(1)=0$ forces $\eta^{}_{\mathsf{ac}}(0)=0$, and
then $\eta^{}_{\mathsf{ac}}(m)=0$ for all $m\in\ZZ$ by the recursion,
which means $\bigl(\widehat{\gamma}\bigr)_{\mathsf{ac}}=0$. When $k +
\ell=3$, we have $\eta^{}_{\mathsf{ac}}(1)=0$, but the recursion
relation for $m=0$ and $r=2$ leads to $\eta^{}_{\mathsf{ac}}(2) =
-\frac{1}{3}\,\eta^{}_{\mathsf{ac}}(0)$, hence again to
$\eta^{}_{\mathsf{ac}}(0)=0$ with the same conclusion.

  As a consequence, $\eta^{}_{\mathsf{sc}}(0)=1$ and
  $\eta=\eta^{}_{\mathsf{sc}}$. We thus have
  $\widehat{\gamma}=\bigl(\widehat{\gamma}\bigr)_{\mathsf{sc}}$ as
  claimed.
\end{proof}

\begin{remark}[\textsf{Diffraction with general weights}]
  If an arbitrary gTM sequence is given, the diffraction of the
  associated Dirac comb with general (complex) weights $h^{}_{\pm}$
  can be calculated as follows. If $h$ is the function defined by
  $h(1)=h^{}_{+}$ and $h(\bar{1})=h^{}_{-}$, one has
\[
   \omega^{}_{h} \, := \, \sum_{n\in\ZZ} h(w_{n})\ts \delta_{n}
   \, = \, \frac{h^{}_{+} \! +h^{}_{-}}{2} \,\delta^{}_{\ZZ} + 
   \frac{h^{}_{+} \! -h^{}_{-}}{2} \, \omega
\]
  with the $\omega$ from Equation~\eqref{eq:comb}. The autocorrelation
  of $\omega^{}_{h}$ clearly exists and calculates as
\[
   \gamma^{}_{h} \, = \, \frac{\lvert h^{}_{+} \! +
   h^{}_{-}\rvert^{2}}{4} \,\delta^{}_{\ZZ} + 
   \frac{\lvert h^{}_{+} \! -h^{}_{-}\rvert^{2}}{4} \, \gamma
\]
  with $\gamma$ from \eqref{eq:gamma}. This follows from
  $\widetilde{\delta^{}_{\ZZ}} = \delta^{}_{\ZZ}$ together with
  $\delta^{}_{\ZZ}\circledast\omega = \delta^{}_{\ZZ} \circledast
  \widetilde{\omega}=0$, which is a consequence of the fact that $1$
  and $\bar{1}$ are equally frequent in all gTM sequences. The
  diffraction is now obtained as
\[
    \widehat{\gamma^{}_{h}} \, = \, \frac{\lvert h^{}_{+} \! +
   h^{}_{-}\rvert^{2}}{4} \,\delta^{}_{\ZZ} + 
   \frac{\lvert h^{}_{+} \! -h^{}_{-}\rvert^{2}}{4} \, \widehat{\gamma}
\]
  by an application of the Poisson summation formula
  $\widehat{\delta^{}_{\ZZ}}=\delta^{}_{\ZZ}$. Since $\widehat{\gamma}$ is
  purely singular continuous, this is a diffraction measure with
  singular spectrum of mixed type.  \exend
\end{remark}

This diffraction does not display the full dynamical spectrum of the
gTM system, which is a well-known phenomenon from the classic TM
system \cite{ME}. In the latter case, this is `rectified' by the
period doubling system as a topological factor. We will return to this
question for the gTM systems in Remark~\ref{rem:Kronecker}.

\section{The diffraction measure of the gTM 
system}\label{sec:gtm-diffract}

Let us consider the diffraction measure in more detail, which we do
via a suitable distribution function $F$ for the (continuous) measure
$\widehat{\gamma}$. This is done as follows. First, we define $F(x) =
\widehat{\gamma} ([0,1]) = \mu([0,x])$ for $x\in [0,1]$.  The
$\ZZ\ts$-periodicity of $\widehat{\gamma}$ together with $\mu([0,1]) =
\eta(0) = 1$ means that $F$ extends to the entire real line via
$F(x+1) = 1 + F(x)$. Moreover, since $\widetilde{\gamma} = \gamma$, we
know that $\widehat{\gamma}$ is reflection symmetric. With $F(0)=0$,
this implies $F(-x)= - F(x)$ on $\RR$, which is our specification of
$F$ in this case.

\begin{prop}\label{prop:distribution}
  Let $k,\ell \in \NN$ be fixed. The distribution function $F$ of the
  corresponding diffraction measure is non-decreasing, continuous,
  skew-symmetric and satisfies the relation $F(x \! + \! 1) = 1 +
  F(x)$ on the real line. Moreover, it possesses the series expansion
\[
    F(x) \,=\,  x + \sum_{m\ge 1} \frac{\eta(m)}{m\pi} \ts
    \sin (2\pi m x) \ts ,
\]
   which converges uniformly on $\RR$. 
\end{prop}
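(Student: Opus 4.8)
The plan is to identify $F$ (or rather its periodic part) with the distribution function of the measure $\mu = \widehat{\gamma}\ts|^{}_{[0,1)}$ on the circle, and to read off the claimed sine series as the Fourier–Stieltjes expansion of $\mu$, using the already-established identities $\check{\mu}(m) = \eta(m)$ and $\eta(-m)=\eta(m)$ from \eqref{eq:eta-versus-mu} and Lemma~\ref{lem:gen-tm-rec}. First I would record the qualitative properties, which are essentially restatements of facts proved before Proposition~\ref{prop:distribution}: non-decreasing because $\mu$ is a positive measure; continuity because $\widehat{\gamma}$ has no pure point part (Theorem~\ref{thm:gen-tm}), so $\mu$ has no atoms; the relation $F(x+1)=1+F(x)$ from $\ZZ$-periodicity of $\widehat{\gamma}$ together with $\mu([0,1))=\eta(0)=1$; and skew-symmetry $F(-x)=-F(x)$ from $\widetilde{\gamma}=\gamma$, i.e.\ reflection symmetry of $\widehat{\gamma}$, together with $F(0)=0$. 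These occupy only a couple of lines.

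The substantive part is the series. Write $G(x) = F(x) - x$, which by the above is continuous, $1$-periodic, and odd. I would expand $G$ as a Fourier series: its sine coefficients are
\[
   b_m \,=\, 2\int_{0}^{1} G(x)\,\sin(2\pi m x)\dd x \,,
\]
and an integration by parts (legitimate since $G$ is of bounded variation, being a difference of monotone functions) converts this into an integral against $\dd F - \dd x$. Concretely,
\[
   2\int_{0}^{1}\!\sin(2\pi m x)\dd F(x)
   \,=\, \frac{1}{i}\!\int_{0}^{1}\!\bigl(e^{2\pi i m x}-e^{-2\pi i m x}\bigr)\dd\mu(x)
   \,=\, \frac{\check{\mu}(-m)-\check{\mu}(m)}{i}\,,
\]
which vanishes because $\eta(-m)=\eta(m)$; hence the sine coefficient of $F$ (in the Stieltjes sense) coincides with that of the ``$x$'' term's complement, and after the integration by parts one is left with $b_m = \eta(m)/(m\pi)$. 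The cosine coefficients vanish by oddness of $G$. Thus the Fourier series of $G$ is exactly $\sum_{m\ge1}\frac{\eta(m)}{m\pi}\sin(2\pi m x)$.

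It remains to upgrade ``Fourier series of $G$'' to ``$G$ equals its Fourier series, uniformly''. This is the one genuine obstacle, since pointwise convergence of a Fourier series is not automatic from continuity. The cleanest route is absolute convergence: from Lemma~\ref{lem:growth} one has $\Sigma(N) = O(N^{\beta})$ for some $\beta<1$ (indeed $\Sigma((k+\ell)N)\le q\,\Sigma(N)$ gives $\beta = \log q/\log(k+\ell) < 1$), whence by Cauchy–Schwarz $\sum_{m=1}^{N}\frac{|\eta(m)|}{m} \le \bigl(\sum_{m\le N}\eta(m)^2\bigr)^{1/2}\bigl(\sum_{m\le N}m^{-2}\bigr)^{1/2}$ is bounded uniformly in $N$; actually a dyadic-block estimate, summing $\sum_{2^{j}\le m<2^{j+1}}\frac{|\eta(m)|}{m}\le 2^{-j}\sqrt{2^{j}}\sqrt{\Sigma(2^{j+1})} = O(2^{j(\beta-1)/2})$ over $j$, shows $\sum_{m\ge1}\frac{|\eta(m)|}{m\pi}<\infty$. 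Hence the trigonometric series converges absolutely and uniformly to a continuous $1$-periodic function, which by the coefficient computation must be $G$ (two continuous functions with the same Fourier coefficients agree). For $k=\ell=1$ the same conclusion follows from the known bound $\Sigma(4N)\le\frac32\Sigma(2N)$ of \cite{Kaku,BG08}, so the argument is uniform in the parameters. I expect the verification of the constants in the integration-by-parts step (getting precisely $\eta(m)/(m\pi)$ rather than a stray factor of $2$ or a sign) to be the only place demanding care.
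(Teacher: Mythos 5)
Your proposal is correct, and it matches the paper's proof in all but one step: the qualitative properties (monotonicity, continuity from Theorem~\ref{thm:gen-tm}, $\ZZ$-periodicity of $\widehat{\gamma}$ with total mass $\eta(0)=1$, reflection symmetry) and the coefficient computation are exactly the paper's, namely integration by parts on $F(x)-x$ (boundary terms vanish since $F(0)=0$, $F(1)=1$) plus the observation that the symmetry $\eta(-m)=\eta(m)$ kills the sine/imaginary part of $\int_0^1 e^{2\pi i m x}\dd F(x)$, leaving $b_m=\eta(m)/(m\pi)$. Where you genuinely diverge is the uniform-convergence step. The paper simply invokes the classical Dirichlet--Jordan-type theorem (cited as \cite[Cor.~1.4.43]{P}): a continuous, $1$-periodic function of bounded variation has a uniformly convergent Fourier series, so no quantitative input is needed and the argument works verbatim for every distribution function in this class. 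You instead prove the stronger statement $\sum_{m\ge 1}\lvert\eta(m)\rvert/m<\infty$ from the growth bound of Lemma~\ref{lem:growth} (giving $\varSigma(N)=O(N^{\beta})$ with $\beta=\log q/\log(k+\ell)<1$, and the known bound $\varSigma(4N)\le\frac32\varSigma(2N)$ for $k=\ell=1$), which yields absolute and uniform convergence, and then identify the sum with $F(x)-x$ by uniqueness of Fourier coefficients of continuous functions. This buys a Wiener-algebra-type conclusion that the paper does not state, at the cost of importing Lemma~\ref{lem:growth} and a case split at $k=\ell=1$. One small repair: your plain Cauchy--Schwarz sentence is not right as written, since $\varSigma(N)\to\infty$ (e.g.\ $\eta((k+\ell)^n)=\eta(1)$ does not decay), so $\bigl(\sum_{m\le N}\eta(m)^2\bigr)^{1/2}\bigl(\sum_{m\le N}m^{-2}\bigr)^{1/2}$ is \emph{not} uniformly bounded in $N$; but the dyadic-block estimate you substitute immediately afterwards, $\sum_{2^j\le m<2^{j+1}}\lvert\eta(m)\rvert/m=O\bigl(2^{j(\beta-1)/2}\bigr)$, is correct and suffices, so the slip is self-corrected and the proof stands.
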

\begin{proof}
  By construction, $F$ is non-decreasing, and is continuous
  by Theorem~\ref{thm:gen-tm}. So, $F(x) - x$ defines a $1$-periodic
  continuous function that is skew-symmetric and the difference of two
  continuous, non-decreasing functions, hence it is of
  bounded variation. By standard results, see \cite[Cor.~1.4.43]{P},
  it has thus a uniformly converging Fourier series expansion
\[
    F(x) - x \, = \, \sum_{m=1}^{\infty} b_{m} \, \sin(2 \pi m x)\ts .
\]
The Fourier coefficient $b_{m}$ (for $m \in \NN$) is 
\[
\begin{split}
   b_{m} \, & = \, 2 \int_{0}^{1} \sin(2 \pi m x) \,
   \bigl( F(x) - x \bigr) \dd x \: = \: \frac{1}{m\pi}\,
   + 2 \int_{0}^{1} \sin(2 \pi m x) \, F(x) \dd x \\[1mm]
   & = \: \frac{1}{m \pi} \int_{0}^{1} \cos(2 \pi m x) \dd F(x)
   \: = \: \frac{1}{m \pi} \int_{0}^{1} e^{2 \pi i m x} \dd F (x)
   \: = \: \frac{\eta(m)}{m \pi} \ts .
\end{split}
\]
The first step in the second line follows from integration by parts,
while the next is a consequence of the symmetry of $\dd F$ together
with its periodicity (wherefore the imaginary part of the integral
vanishes). Recalling that $F$ (restricted to $[0,1]$) is the
distribution function of the probability measure $\mu$ completes the
argument.
\end{proof}

It is interesting that the autocorrelation coefficients occur as
Fourier coefficients this way. In preparation of a later result, let
us look at this connection more closely. A key observation is that the
recursion relations for $\eta$ from Lemma~\ref{lem:gen-tm-rec} (which
have a unique solution once the initial condition $\eta(0)=1$ is
given, with $\lvert \eta(n)\rvert \le 1$ for all $n\in\ZZ$) can also
be read as a recursion as follows. Let $\beta \in [-1,1]^{\NN}$ be
a sequence and define a mapping $\Psi$ via
\begin{equation}\label{eq:seq-rec}
   \bigl(\Psi \beta\bigr)_{(k+\ell)n + r} :=
   \myfrac{1}{k+\ell} \begin{cases}
     \alpha^{}_{k,\ell,r} +
     \alpha^{}_{k,\ell,k+\ell-r} \, \beta^{}_{n+1} \,  , &
    \text{if $n=0$ and $1 \le r < k+\ell$,} \\[1mm]
    (k+\ell)\,  \beta^{}_{n} \, , & 
    \text{if $n\in \NN$ and $r=0$,} \\[1mm]
    \alpha^{}_{k,\ell,r} \, \beta^{}_{n}
   + \alpha^{}_{k,\ell,k+\ell-r} \, \beta^{}_{n+1} \,  , &
    \text{if $n\in \NN$ and $1 \le r < k+\ell$,}
   \end{cases}
\end{equation}
which completely defines the sequence $\Psi \beta$. This mapping
derives from the recursion for $\eta$ with positive arguments
when $\eta(0)=1$.

\begin{lemma}\label{lem:seq-rec}
  The mapping\/ $\Psi$ maps $[-1,1]^{\NN}$ into itself, with\/ $\|
  \Psi \beta \|^{}_{\infty} \le \| \beta \|^{}_{\infty}$. Moreover,
  for any $\beta^{(0)}\in [-1,1]^{\NN}$, the iteration sequence
  $\beta^{(N)}$ defined by $\beta^{(N+1)} = \Psi \beta^{(N)}$ for
  $N\ge 0$ converges pointwise towards
  $\bigl(\eta(n)\bigr)_{n\in\NN}$.
\end{lemma}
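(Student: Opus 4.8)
The plan is to read $\Psi$ through the recursion of Lemma~\ref{lem:gen-tm-rec}: given $\beta\in[-1,1]^{\NN}$, form the extended sequence with the value $1$ in position $0$, apply one step of that recursion for nonnegative arguments, and discard position $0$ again. The three lines of \eqref{eq:seq-rec} are exactly the output of this operation, the bare constant $\alpha^{}_{k,\ell,r}$ in the first line playing the role of $\alpha^{}_{k,\ell,r}\ts\eta(0)$ with $\eta(0)=1$. With this picture in mind, both assertions of the lemma become natural.

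First I would check that $\Psi$ maps $[-1,1]^{\NN}$ into itself. When $r=0$ one simply has $(\Psi\beta)^{}_{(k+\ell)n}=\beta^{}_n$. When $1\le r<k+\ell$, the coordinate $(\Psi\beta)^{}_{(k+\ell)n+r}$ is $\frac{1}{k+\ell}$ times an affine combination, with coefficients $\alpha^{}_{k,\ell,r}$ and $\alpha^{}_{k,\ell,k+\ell-r}$, of two entries of the sequence $(1,\beta^{}_1,\beta^{}_2,\dots)$; by inequality~\eqref{eq:alsum}, $\lvert\alpha^{}_{k,\ell,r}\rvert+\lvert\alpha^{}_{k,\ell,k+\ell-r}\rvert\le k+\ell$, so this combination has modulus at most $\max(1,\|\beta\|^{}_{\infty})$, which equals $1$ on $[-1,1]^{\NN}$. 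Hence $\Psi\beta\in[-1,1]^{\NN}$; and since the constant term cancels in a difference of two outputs, the same estimate gives the non-expansiveness $\|\Psi\beta-\Psi\beta'\|^{}_{\infty}\le\|\beta-\beta'\|^{}_{\infty}$. In particular every iterate $\beta^{(N)}$ stays in $[-1,1]^{\NN}$, and $\bigl(\eta(n)\bigr)^{}_{n\in\NN}$ is a fixed point of $\Psi$, this being precisely the recursion of Lemma~\ref{lem:gen-tm-rec} for positive arguments evaluated with $\eta(0)=1$.

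The convergence $\beta^{(N)}\to\bigl(\eta(n)\bigr)^{}_{n\in\NN}$ I would then prove coordinatewise, by strong induction on the coordinate $p\in\NN$. The coordinate $p=1$ is the only self-referential one: by \eqref{eq:seq-rec}, the scalars $x^{}_N:=(\beta^{(N)})^{}_1$ obey the affine recursion $x^{}_{N+1}=\frac{1}{k+\ell}\bigl(\alpha^{}_{k,\ell,1}+\alpha^{}_{k,\ell,k+\ell-1}\,x^{}_N\bigr)$, and since $\lvert\alpha^{}_{k,\ell,k+\ell-1}\rvert\le k+\ell-(k+\ell-1)=1<k+\ell$ (in fact $\alpha^{}_{k,\ell,k+\ell-1}=-1$), this is a contraction of ratio $1/(k+\ell)$; hence $x^{}_N$ converges to its unique fixed point, which by the $m=0$, $r=1$ instance of Lemma~\ref{lem:gen-tm-rec} equals $\eta(1)$. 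For $p\ge2$, \eqref{eq:seq-rec} expresses $(\beta^{(N)})^{}_p$ as a fixed affine function of finitely many entries of $\beta^{(N-1)}$, possibly together with the constant $1$; and a short computation from $k+\ell\ge2$ shows that all these entries sit at indices strictly smaller than $p$ (the only one that could a priori be as large as $p$, namely $\lfloor p/(k+\ell)\rfloor+1$, occurs only when $p$ is not a multiple of $k+\ell$, and is then indeed $<p$ for $p\ge2$). By the induction hypothesis these entries converge, as $N\to\infty$, to the corresponding values of $\eta$, so by continuity of the affine map together with the recursion of Lemma~\ref{lem:gen-tm-rec} we obtain $(\beta^{(N)})^{}_p\to\eta(p)$, which closes the induction.

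The only genuinely delicate point is the single self-loop at $p=1$, which is why one must identify its exact contraction ratio $\lvert\alpha^{}_{k,\ell,k+\ell-1}\rvert/(k+\ell)<1$ — via the elementary estimate $\lvert\alpha^{}_{k,\ell,r}\rvert\le k+\ell-r$ for $1\le r\le k+\ell-1$, already recorded in the proof of Lemma~\ref{lem:growth}. Apart from that, the argument is bookkeeping with the recursion, with a little care needed in the small cases $p\le k+\ell$, where the reduction goes to the index $1$ (and the constant) rather than to a pair of indices, but these fit the same scheme.
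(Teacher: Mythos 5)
Your proof is correct and follows essentially the same route as the paper: the closed affine self-iteration at the first coordinate, $\beta^{\ts\prime}_{1} = \frac{k+\ell-3}{k+\ell} - \frac{1}{k+\ell}\beta^{}_{1}$, is identified as a contraction with fixed point $\eta(1)$, and convergence then propagates to all higher coordinates because each coordinate depends affinely on strictly lower-indexed ones from the previous step. The only (cosmetic) difference is that you organise the propagation as a strong induction on the coordinate index, whereas the paper phrases it via the blocks $(k+\ell)^{m} \le n < (k+\ell)^{m+1}$ emerging from $\beta^{}_{1}$ in $m$ steps.
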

\begin{proof}
The first claim follows from $\lvert \alpha^{}_{k,\ell,r} \rvert +
\lvert \alpha^{}_{k,\ell,k+\ell-r} \rvert \le k+\ell$, which was used
earlier in the proof of Lemma~\ref{lem:growth}, via the triangle
inequality. When $r=0$ or when $k=\ell=r$, one has equality here,
so that $\Psi$ is not a contraction on $[-1,1]^{\NN}$ for the
supremum norm.

Observe that the iteration for $\beta^{}_{1}$ is closed and reads
\[
    \beta^{\ts\prime}_{1} \, = \, \myfrac{k+\ell-3}{k+\ell}
    - \myfrac{1}{k+\ell}\, \beta^{}_{1}\, ,
\]
which is an affine mapping with Lipschitz constant $\frac{1}{k+\ell}$
and hence a contraction. The iteration for $\beta^{}_{1}$ thus
converges exponentially fast (to $\frac{k+\ell-3}{k+\ell+1}$) by
Banach's contraction principle.

What happens with the iteration for $\beta^{}_{1}$ determines
everything else, because the components $\beta_{n}$ with $(k+\ell)^{m}
\le n < (k+\ell)^{m+1}$ and $m \ge 0$ emerge from $\beta^{}_{1}$ in
$m$ steps of the iteration.  In particular, the iteration also closes
on any finite block with $1 \le n < (k+\ell)^{m}$ and fixed $m\in
\NN$, and shows exponentially fast convergence. Note though that the
iteration is only non-expanding as soon as $m > 1$, while $\Psi$
induces an affine mapping with Lipschitz constant
\[
    L \, = \, \frac{\max \big\{ \lvert 
    \alpha^{}_{k,\ell,k+\ell-r} \rvert \,\big|\,
    1 \le r < k+\ell \big\} }{k+\ell} \, \le \,
    \frac{k+\ell-1}{k+\ell} \, < \, 1 \, , 
\]
on the components $\beta_{n}$ with $1\le n < k+\ell$.

Pointwise convergence is now clear, and the limit is the one
specified by the original recursion, which proves the claim.
\end{proof}

The recursion relations for $\eta$ can also be used to derive a
functional equation for the distribution function $F$. Observe first
that
\begin{equation}\label{eq:left}
\begin{split}
  \eta\bigl((k\!+\!\ell)m+r\bigr) \, & = \,
   \int_{0}^{1} e^{2\pi i ((k+\ell)m+r)x} \dd F(x) \,
   =   \int_{0}^{k+\ell} \! e^{2\pi im x} \, e^{2\pi i \frac{rx}{k+\ell}} 
       \dd F\bigl(\tfrac{x}{k+\ell}\bigr) \\[1mm]
  & =  \int_{0}^{1} \! e^{2\pi im x} \, e^{2\pi i \frac{rx}{k+\ell}} 
       \sum_{s=0}^{k+\ell-1} e^{2\pi i \frac{rs}{k+\ell}} 
       \dd F\bigl(\tfrac{x+s}{k+\ell}\bigr) .
\end{split}
\end{equation}
On the other hand, we know from Lemma~\ref{lem:gen-tm-rec} that 
\begin{equation}\label{eq:right}
\begin{split}
  \eta\bigl((k\!+\!\ell)m+r\bigr) \, & = \,
   \myfrac{1}{k+\ell}
     \bigl( \alpha^{}_{k,\ell,r}\, \eta(m) +
     \alpha^{}_{k,\ell,k+\ell-r} \, \eta(m\!+\!1) \bigr)\\[1mm] 
  & = \, 
  \int_{0}^{1} e^{2\pi i m x} \, \frac{\alpha^{}_{k,\ell,r} + 
  \alpha^{}_{k,\ell,k+\ell-r}\, e^{2\pi i x}}{k+\ell}\dd F(x)\, . 
\end{split}
\end{equation}
A comparison of \eqref{eq:left} and \eqref{eq:right} leads to the
following result.

\begin{prop}\label{prop:fe}
The distribution function $F$ for $k,\ell\in\NN$ satisfies the
functional equation
\[
   F(x) \, = \, \myfrac{1}{k+\ell}\int_{0}^{(k+\ell)x}\!\! 
   \vartheta\bigl(\tfrac{y}{k+\ell}\bigr) \dd F(y)
   \quad\text{with}\quad
   \vartheta(x)=1+\myfrac{2}{k+\ell}\sum_{r=1}^{k+\ell-1}\! 
   \alpha^{}_{k,\ell,r} \, \cos(2\pi rx)\, .
\]
This relation holds for all $x\in\RR$, and $\vartheta$ is continuous 
and non-negative.
\end{prop}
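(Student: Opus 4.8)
The plan is to promote the comparison of \eqref{eq:left} and \eqref{eq:right} to a genuine identity of measures on the circle, then to recover $\dd F$ from it by inverting a finite (discrete) Fourier transform; the non-negativity of $\vartheta$ will be established independently via a squared-modulus representation. Write $q=k+\ell$ throughout.

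For the functional equation, the first step is to read \eqref{eq:left} and \eqref{eq:right} correctly: for each fixed $r$ with $0\le r<q$ and \emph{all} $m\in\ZZ$, both sides are the $m$-th Fourier--Stieltjes coefficient of a finite complex measure on the unit circle (represented by $[0,1)$), namely of $\nu^{}_{r}:=\sum_{s=0}^{q-1}e^{2\pi i r(x+s)/q}\,\dd F\bigl(\tfrac{x+s}{q}\bigr)$ on the left, and of $\tfrac{1}{q}\bigl(\alpha^{}_{k,\ell,r}+\alpha^{}_{k,\ell,q-r}\ts e^{2\pi i x}\bigr)\dd F(x)$ on the right (here $\dd F$ denotes the $\ZZ$-periodic measure set up before the Proposition, and $\alpha^{}_{k,\ell,0}=q$, $\alpha^{}_{k,\ell,q}=0$ are read off from the defining formula). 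Since $\dd F$ is continuous and periodic, and since a finite measure on the circle is determined by its Fourier coefficients, this yields, for $r=0,1,\dots,q-1$, the measure identity
\[
   \sum_{s=0}^{q-1} e^{2\pi i r(x+s)/q}\,\dd F\bigl(\tfrac{x+s}{q}\bigr)
   \;=\; \tfrac{1}{q}\bigl(\alpha^{}_{k,\ell,r}+\alpha^{}_{k,\ell,q-r}\ts e^{2\pi i x}\bigr)\,\dd F(x)
\]
on $[0,1)$, where $\dd F\bigl(\tfrac{x+s}{q}\bigr)$ is meant as the measure in $x$ that traces $\dd F$ over $[s/q,(s+1)/q)$.

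The second step is the inversion in $r$. Fixing $t\in\{0,\dots,q-1\}$, I multiply the $r$-th identity by $e^{-2\pi i r(x+t)/q}$ and sum over $r$. On the left, the orthogonality relation $\sum_{r=0}^{q-1}e^{2\pi i r(s-t)/q}=q\,\delta^{}_{s,t}$ leaves only the term $s=t$, giving $q\,\dd F\bigl(\tfrac{x+t}{q}\bigr)$. On the right, splitting off the two pieces and reindexing the second one by $r\mapsto q-r$ (legitimate because $t\in\ZZ$ and $\alpha^{}_{k,\ell,q}=0$) recombines the sums into $q+2\sum_{r=1}^{q-1}\alpha^{}_{k,\ell,r}\cos\bigl(2\pi r\tfrac{x+t}{q}\bigr)=q\,\vartheta\bigl(\tfrac{x+t}{q}\bigr)$. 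Hence $\dd F\bigl(\tfrac{x+t}{q}\bigr)=\tfrac{1}{q}\,\vartheta\bigl(\tfrac{x+t}{q}\bigr)\dd F(x)$ for every $t$. Integrating in $x$ from $0$ to $c$ and substituting $y=x+t$ (using the $\ZZ$-periodicity of $\dd F$) gives $F\bigl(\tfrac{c+t}{q}\bigr)-F\bigl(\tfrac{t}{q}\bigr)=\tfrac1q\int_{t}^{t+c}\vartheta\bigl(\tfrac{y}{q}\bigr)\dd F(y)$; telescoping these relations over $t=0,1,\dots$ and using $F(0)=0$ yields $F(z)=\tfrac1q\int_{0}^{qz}\vartheta\bigl(\tfrac{y}{q}\bigr)\dd F(y)$ for $z\in[0,1]$. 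It extends to all of $\RR$ because both sides increase by $1$ under $z\mapsto z+1$, the measure $\vartheta\bigl(\tfrac{y}{q}\bigr)\dd F(y)$ being $q$-periodic in $y$ with total mass $q$ over a period.

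Finally, for the properties of $\vartheta$: continuity is clear, $\vartheta$ being a finite trigonometric polynomial. For non-negativity I would prove the representation $q\,\vartheta(x)=\bigl|\sum_{j=0}^{k-1}e^{2\pi i j x}-\sum_{j=k}^{k+\ell-1}e^{2\pi i j x}\bigr|^{2}$; expanding the modulus square, this reduces to the combinatorial identity $\alpha^{}_{k,\ell,r}=\sum_{j}\varepsilon^{}_{j}\ts\varepsilon^{}_{j+r}$ for $0\le r<q$, where $\varepsilon=(\underbrace{+1,\dots,+1}_{k},\underbrace{-1,\dots,-1}_{\ell})$ is the sign pattern of the word $\varrho^{}_{k,\ell}(1)$; this in turn follows by counting equal-sign minus opposite-sign index pairs and matching the result with $q-r-2\min(k,\ell,r,q-r)$. (This squared-modulus form is precisely what produces the Riesz product structure later on.) The two places requiring care --- checking that the manipulations in \eqref{eq:left} and \eqref{eq:right} really produce matching Fourier coefficients for \emph{all} $m\in\ZZ$, and the elementary but case-laden verification of $\alpha^{}_{k,\ell,r}=\sum_j\varepsilon^{}_j\varepsilon^{}_{j+r}$ --- are, I expect, the only genuine obstacles, and both are routine.
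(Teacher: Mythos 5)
Your derivation of the functional equation follows the paper's own route essentially step by step: the equality of the Fourier--Stieltjes coefficients in \eqref{eq:left} and \eqref{eq:right} is promoted to an identity of $1$-periodic measures, inverted in $r$ via the orthogonality relation $\sum_{r=0}^{k+\ell-1}e^{2\pi i r(s-t)/(k+\ell)}=(k+\ell)\,\delta_{s,t}$ to give $\dd F\bigl(\tfrac{x+t}{k+\ell}\bigr)=\tfrac{1}{k+\ell}\,\vartheta\bigl(\tfrac{x+t}{k+\ell}\bigr)\dd F(x)$, and then integrated; your telescoping over $t$ is just a rephrasing of the paper's splitting of $\int_{0}^{x}\dd F$ into $[(k+\ell)x]$ full intervals plus a fractional piece, and the extension to all of $\RR$ is handled the same way. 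The one genuine divergence is the non-negativity of $\vartheta$: the paper argues indirectly, noting that $\vartheta(a)<0$ at some point would, by continuity and the already established functional equation, contradict the monotonicity of $F$; you instead propose the direct representation $(k+\ell)\,\vartheta(x)=\bigl\lvert\sum_{j=0}^{k+\ell-1}\varepsilon_{j}\ts e^{2\pi i jx}\bigr\rvert^{2}$, with $\varepsilon$ the sign pattern of $\varrho^{}_{k,\ell}(1)$, which reduces to the identity $\alpha^{}_{k,\ell,r}=\sum_{j}\varepsilon^{}_{j}\varepsilon^{}_{j+r}$. That identity is indeed correct: the correlation sum evaluates to $k+\ell+r-2\bigl(\min(k,r)+\min(\ell,r)\bigr)$, and a three-case check ($r\le\min(k,\ell)$, $\min(k,\ell)<r\le\max(k,\ell)$, $r>\max(k,\ell)$) shows this agrees with $k+\ell-r-2\min(k,\ell,r,k+\ell-r)$, so your ``routine obstacle'' does close. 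Your variant costs this small combinatorial verification but buys more than the paper's argument: it establishes $\vartheta\ge 0$ independently of $F$ and of the functional equation, exhibits $\vartheta$ as a Fej\'er-type kernel, and makes the Riesz-product representation of $\widehat{\gamma}$ used in Section~\ref{sec:gtm-diffract} transparent; the paper's contradiction argument is shorter but purely qualitative.
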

\begin{proof}
Equations~\eqref{eq:left} and \eqref{eq:right}, which hold for all
$m\in\ZZ$, state the equality of the Fourier coefficients of two
$1$-periodic Riemann-Stieltjes measures, which must thus be equal (as
measures). For $0\le r<k\!+\!\ell$, we thus have 
\[
   \sum_{s=0}^{k+\ell-1} e^{2\pi i \frac{rs}{k+\ell}} 
       \dd F\bigl(\tfrac{x+s}{k+\ell}\bigr)  \, = \, 
   e^{-2\pi i \frac{rx}{k+\ell}}\, 
  \frac{\alpha^{}_{k,\ell,r} + 
  \alpha^{}_{k,\ell,k+\ell-r}\, e^{2\pi i x}}{k+\ell}\dd F(x)\, .
\]
Fix an integer $t$ with $0\le t<k\!+\!\ell$ and multiply the equation
for $r$ on both sides by $\exp\bigl(-2\pi i \frac{r t}{k+\ell}\bigr)$.
Since $\sum_{r=0}^{k+\ell-1}\exp\bigl(2\pi i \frac{r (s-t)}{k+\ell}
\bigr)= (k \! + \! \ell)\, \delta_{s,t}$, a summation over $r$
followed by a division by $(k+\ell)$ leads to
\[
\begin{split}
  \dd F\bigl(\tfrac{x+t}{k+\ell}\bigr) \, & = \,
   \myfrac{1}{k+\ell}\left( 1 + \sum_{r=1}^{k+\ell-1} \left(
   \frac{\alpha^{}_{k,\ell,r}}{k+\ell} \, e^{-2\pi i
   \frac{r(x+t)}{k+\ell}} + \frac{\alpha^{}_{k,\ell,k+\ell-r}}{k+\ell}
   \, e^{2\pi i \frac{(k+\ell-r)x-rt}{k+\ell}}\right)\right) \dd
   F(x)\\[1mm] & = \, \myfrac{1}{k+\ell}\,\biggl( 1 +
   \myfrac{2}{k+\ell} \sum_{r=1}^{k+\ell-1} \!\alpha^{}_{k,\ell,r} \,
   \cos\Bigl(2\pi \tfrac{r(x+t)}{k+\ell}\Bigr)\biggr) \dd F(x) \, = \,
   \frac{\vartheta\bigl(\frac{x+t}{k+\ell}\bigr)}{k+\ell}\, \dd F(x)
\end{split}   
\]
which is valid for all $x\in[0,1)$.

To derive the functional equation, we need to calculate $F(x)$ and
hence to integrate the previous relations with an appropriate
splitting of the integration region. When $[y]$ and $\{y\}$ denote the
integer and the fractional part of $y$, one finds
\begin{equation}\label{eq:fsplit}
   F(x) \,=  \int_{0}^{x} \dd F(y) 
   \, =  \int_{0}^{\{(k+\ell)x\}} \!
   \dd F\bigl(\tfrac{y+[(k+\ell)x]}{k+\ell}\bigr) \; +  \!\!
   \sum_{t=0}^{[(k+\ell)x]-1} \!\int_{0}^{1} \!
   \dd F\bigl(\tfrac{y+t}{k+\ell}\bigr)\, ,
\end{equation}
which holds for all $x\in [0,1)$. Observe next that 
\[
   \int_{0}^{1} \!
   \dd F\bigl(\tfrac{y+t}{k+\ell}\bigr) \, = \,
   \myfrac{1}{k+\ell} \int_{0}^{1} 
   \vartheta\bigl(\tfrac{y+t}{k+\ell}\bigr) \dd F(y) \, = \, 
   \myfrac{1}{k+\ell} \int_{t}^{t+1} \!
   \vartheta\bigl(\tfrac{z}{k+\ell}\bigr) \dd F(z) \, .
\]
holds for any $0\le t\le k\!+\!\ell$, where we used that $\dd
F(z-n)=\dd F(z)$ for all $n\in \ZZ$ as a consequence of the relation
$F(z+1)=1+F(z)$ for $z\in\RR$.  Similarly, one obtains
\[
   \int_{0}^{\{(k+\ell)x\}} \!
   \dd F\bigl(\tfrac{y+[(k+\ell)x]}{k+\ell}\bigr) \, = \, 
   \myfrac{1}{k+\ell}\int_{[(k+\ell)x]}^{(k+\ell)x} 
   \vartheta\bigl(\tfrac{z}{k+\ell}\bigr) \dd F(z)\, .
\] 
We can now put the pieces in \eqref{eq:fsplit} together to obtain the
functional equation as claimed, which clearly holds for all $x\in \RR$.

The continuity of $\vartheta$ is clear. Its non-negativity follows
from the functional equation, because we know that $F$ is
non-decreasing on $[0,1]$ (as it is the distribution function of the
positive measure $\mu$). If we had $\vartheta(a)<0$ for some $a\in
[0,1]$, there would be some $\varepsilon>0$ such that $\vartheta(y) <
-\varepsilon$ in a neighbourhood of $a$, which would produce a
contradiction to the monotonicity of $F$ via the functional equation.
\end{proof}

\begin{remark}[\textsl{Properties of the integration kernel}]
   The non-negative function $\vartheta$ of Proposition~\ref{prop:fe}
   has various interesting and useful properties. Among them are the
   normalisation relations
\[
    \int_{0}^{1} \vartheta (x) \dd x \, = \, 1 
    \quad \text{and} \quad
    \int_{0}^{\frac{1}{2}} \vartheta (x) \dd x 
    \, = \, \myfrac{1}{2} \ts ,
\]
which follow from $\int_{0}^{1} \cos (2 \pi r x) \dd x = 0$ for $r\ne
0$ together with the $1$-periodicity of $\vartheta$ and its symmetry
(whence we also have $\vartheta(1-x) = \vartheta(x)$).  Another is
the bound
\[
    \| \vartheta \|^{}_{\infty} \, \le \, q \, ,
\]
  where $q$ is the number from Lemma~\ref{lem:growth} (when
  $k+\ell > 2$) or $q=2$ (when $k=\ell=1$). This bound is
  proved by another use of Eq.~\eqref{eq:alsum} and the lines
  following it. \exend
\end{remark}

The functional equation of Proposition~\ref{prop:fe} provides the basis
for the calculation of $F$ by a Volterra-type iteration. To this end,
one defines $F^{}_{0}(x)=x$ (so that $\dd F^{}_{0}(x)=\dd x$) together
with the recursion
\begin{equation}\label{eq:volt}
   F_{n+1}(x) \, =\, \myfrac{1}{k+\ell}\int_{0}^{(k+\ell)x}
   \!\! 
   \vartheta\bigl(\tfrac{y}{k+\ell}\bigr) \dd F_{n}(y)
\end{equation}
for $n\ge 0$. It is clear that each $F_{n}$ defines an absolutely
continuous Riemann-Stieltjes measure, so that one can define
densities $f_{n}$ via $\dd F_{n}(x) = f_{n}(x)\dd x$. This gives
\[
  \int_{0}^{x} f_{n+1}(z) \dd z \, = \, F_{n+1}(x) 
  \, =\, \int_{0}^{x} \vartheta(z) \, 
  f_{n}\bigl((k+\ell)z\bigr) \dd z \, ,
\]
which results in $f_{n+1}(z) = \vartheta(z)\,
f_{n}\bigl((k+\ell)z\bigr)$, and hence in the continuous function
\begin{equation}\label{eq:riesz}
   f_{n}(z) \, = \, \prod_{j=0}^{n-1} 
   \vartheta\bigl((k+\ell)^{j}z\bigr) .
\end{equation}

To put this iteration into perspective, let us introduce the space $D$
of non-decreasing and continuous real-valued functions $G$
on $\RR$ that satisfy $G(-x)=-G(x)$ and $G(x+1)=1+ \ts G(x)$ for all
$x\in\RR$. In particular, this implies $G(q)=q$ for all
$q\in\frac{1}{2}\ZZ$. We equip this space with the
$\|.\|_{\infty}$-norm, and thus with the topology induced by uniform
convergence, in which the space is closed and complete.  Each $G\in D$
defines a positive Riemann-Stieltjes measure on $\RR$ that is
reflection symmetric and $1$-periodic.  Also, $G(x)-x$ always defines
a continuous, skew-symmetric and $1$-periodic function of bounded
variation. Our distribution functions $F$ from above are elements of
$D$.

Let $k,\ell \in \NN$ be fixed. Define a mapping $\Phi$ by
$G\mapsto\Phi G$, where
\begin{equation}\label{eq:Gmap}
   \Phi G(x) \,= \, \myfrac{1}{k+\ell} \int_{0}^{(k+\ell)x} 
   \!\vartheta\bigl(\tfrac{y}{k+\ell}\bigr)\dd G(y) \, .
\end{equation}
Clearly, our previous iteration \eqref{eq:volt} can now be
written as $F_{n+1}=\Phi F_{n}$ with the initial condition
$F^{}_{0}(x)=x$, where $F^{}_{0}\in D$.

\begin{lemma}
  The operator\/ $\Phi$ maps $D$ into itself. Moreover, for arbitrary
  $F^{(0)} \in D$, the iteration sequence defined by $F^{(N+1)} := \Phi
  F^{(N)}$ for $N\ge 0$ converges uniformly to the continuous distribution
  function $F$ of Proposition~$\ref{prop:distribution}$.
\end{lemma}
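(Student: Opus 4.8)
I would prove the two claims separately, reducing the convergence statement to Lemma~\ref{lem:seq-rec} by transporting the iteration to the Fourier coefficients of the measures $\dd F^{(N)}$.

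\emph{That $\Phi$ maps $D$ into $D$.} Substituting $y=(k+\ell)z$ in \eqref{eq:Gmap} presents $\Phi G(x)$ as $\tfrac{1}{k+\ell}\int_0^x\vartheta(z)\,\dd\mu^{}_G(z)$, where $\mu^{}_G$ is the positive measure obtained by contracting $\dd G$ by the factor $k+\ell$; since $\dd G$ is $1$-periodic with $k+\ell\in\NN$, $\mu^{}_G$ is again $1$-periodic, and since $G$ is skew, $\mu^{}_G$ is reflection symmetric. As $\vartheta\ge 0$ by Proposition~\ref{prop:fe}, $\Phi G$ is non-decreasing, and it is continuous because $\vartheta$ and $G$ are. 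The Stieltjes measure $\dd(\Phi G)=\tfrac{1}{k+\ell}\vartheta\,\dd\mu^{}_G$ is then $1$-periodic, and as $\vartheta$ is even (Proposition~\ref{prop:fe} and the remark after it) also reflection symmetric; with $\Phi G(0)=0$ this gives $\Phi G(-x)=-\Phi G(x)$. Finally $\Phi G(x+1)-\Phi G(x)=\Phi G(1)=\tfrac{1}{k+\ell}\int_0^{k+\ell}\vartheta\bigl(\tfrac{y}{k+\ell}\bigr)\dd G(y)$, which by the $1$-periodicity of $\dd G$ and the identity $\sum_{s=0}^{k+\ell-1}\vartheta\bigl(\tfrac{y+s}{k+\ell}\bigr)=k+\ell$ (the sum over $s$ annihilates each cosine of $\vartheta$) equals $\int_0^1\dd G=1$. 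Hence $\Phi G\in D$.

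\emph{The iteration on Fourier coefficients.} For $G\in D$ set $c^{}_m(G):=\int_0^1 e^{2\pi i m x}\dd G(x)$, so $|c^{}_m(G)|\le 1$, $c^{}_0(G)=1$ and $c^{}_{-m}(G)=\overline{c^{}_m(G)}$. Rerunning the periodisation and discrete-Fourier-transform computation from the proof of Proposition~\ref{prop:fe}, but with a general $G\in D$ in place of the fixed point, should yield, for all $m\ge 0$ and $0\le r<k+\ell$,
\[
   c^{}_{(k+\ell)m+r}(\Phi G) \,=\, \myfrac{1}{k+\ell}
   \bigl(\alpha^{}_{k,\ell,r}\,c^{}_m(G)+\alpha^{}_{k,\ell,k+\ell-r}\,c^{}_{m+1}(G)\bigr) ,
\]
and in particular $c^{}_{(k+\ell)m}(\Phi G)=c^{}_m(G)$. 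By \eqref{eq:seq-rec}, this says exactly that $\bigl(c^{}_m(\Phi G)\bigr)^{}_{m\in\NN}=\Psi\bigl(c^{}_m(G)\bigr)^{}_{m\in\NN}$, with $c^{}_0=1$ as the fixed initial datum. Iterating from $F^{(0)}\in D$, whose coefficient sequence lies in $[-1,1]^{\NN}$, gives $\bigl(c^{}_m(F^{(N)})\bigr)^{}_{m\in\NN}=\Psi^N\bigl(c^{}_m(F^{(0)})\bigr)^{}_{m\in\NN}$, so Lemma~\ref{lem:seq-rec} yields $c^{}_m(F^{(N)})\to\eta(m)$ as $N\to\infty$ for every $m\in\NN$, hence for every $m\in\ZZ$ by conjugation; by \eqref{eq:eta-versus-mu} these are precisely the Fourier coefficients of $\dd F$.

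\emph{From coefficients to uniform convergence, and the main obstacle.} Since all the $\dd F^{(N)}$ and $\dd F$ are probability measures on the circle $[0,1)$ with pointwise convergent Fourier coefficients, $\dd F^{(N)}\to\dd F$ weakly (trigonometric polynomials are dense in $C([0,1))$ and the family is automatically tight). By Theorem~\ref{thm:gen-tm}, $F$ is continuous, hence $\dd F$ is non-atomic and $F^{(N)}(x)=\dd F^{(N)}([0,x])\to\dd F([0,x])=F(x)$ for every $x\in[0,1]$; as the $F^{(N)}$ are non-decreasing and the limit is continuous, this pointwise convergence is actually uniform on $[0,1]$ by P\'olya's theorem, and $F^{(N)}(x+1)=1+F^{(N)}(x)$ propagates it to all of $\RR$. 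The step requiring real work is the coefficient identity above: for a general $G\in D$ one must redo the change of variables, the splitting $\int_0^{k+\ell}=\sum_t\int_t^{t+1}$ together with the $1$-periodicity of $\dd G$, and the identity $\sum_{s=0}^{k+\ell-1}e^{2\pi i rs/(k+\ell)}\vartheta\bigl(\tfrac{x+s}{k+\ell}\bigr)=e^{-2\pi i rx/(k+\ell)}\bigl(\alpha^{}_{k,\ell,r}+\alpha^{}_{k,\ell,k+\ell-r}e^{2\pi i x}\bigr)$ from the proof of Proposition~\ref{prop:fe}. A lesser subtlety is that convergent Fourier coefficients give only \emph{weak} convergence of the measures, so the continuity of $F$ is genuinely used to get uniform convergence of the $F^{(N)}$; alternatively, a Helly selection shows that any subsequential limit of $(F^{(N)})$ is a non-decreasing function with the same Fourier coefficients as $F$, hence equals $F$.
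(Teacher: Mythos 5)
Your proposal is correct and follows essentially the same route as the paper: the same verification that $\Phi$ preserves $D$, the identification of the Fourier-coefficient iteration with the map $\Psi$ of Lemma~\ref{lem:seq-rec}, weak convergence via trigonometric polynomials, and the final monotonicity-plus-continuous-limit (P\'olya/stepping-stone) argument for uniformity. The only (cosmetic, and arguably cleaner) difference is that you track the Fourier--Stieltjes coefficients of the measures $\dd F^{(N)}$ and use Helly--Bray at continuity points, whereas the paper works with the Fourier coefficients $\beta^{(N)}$ of the functions $F^{(N)}(x)-x$; the coefficient identity you flag as the remaining work indeed holds for every $G\in D$ by exactly the finite-Fourier computation from the proof of Proposition~\ref{prop:fe}, which is also what the paper implicitly uses when asserting $\beta^{(N+1)}=\Psi\beta^{(N)}$.
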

\begin{proof}
  Let $G\in D$. It is clear that $\Phi G$ is again continuous (since
  $\vartheta$ and $G$ are continuous) and non-decreasing
  (since $\vartheta$ is non-negative by Proposition~\ref{prop:fe}).
  The skew-symmetry follows from $\dd G(-y) = -\dd G(y)$ by a simple
  calculation. Finally, one has
\[
   \Phi G(x+1) \, = \, \myfrac{1}{k+\ell} \int_{0}^{(k+\ell)(x+1)} 
   \vartheta\bigl(\tfrac{y}{k+\ell}\bigr) \dd G(y) \, = \, 
   \Phi G(x) + I\, ,
\]
  where the remaining integral $I$, using $\vartheta$ from
  Proposition~\ref{prop:fe} and the periodicity of $\dd G$, is
\[
\begin{split}
   I \, & = \, \myfrac{1}{k+\ell}\int_{(k+\ell)x}^{(k+\ell)x + (k+\ell)}
   \vartheta\bigl(\tfrac{y}{k+\ell}\bigr) \dd G(y) \: = \: 
   \myfrac{1}{k+\ell}\int_{0}^{k+\ell}
   \vartheta\bigl(\tfrac{y}{k+\ell}\bigr) \dd G(y) \\[1mm]
    & = \, \myfrac{1}{k+\ell} \sum_{t=0}^{k+\ell-1} \int_{t}^{t+1}
    \vartheta\bigl(\tfrac{y}{k+\ell}\bigr) \dd G(y) \:
     = \: \myfrac{1}{k+\ell} \sum_{t=0}^{k+\ell-1} \int_{0}^{1}
    \vartheta\bigl(\tfrac{y+t}{k+\ell}\bigr) \dd G(y) \\[1mm]
    & = \, 1 + \myfrac{2}{(k+\ell)^{2}} \sum_{r=1}^{k+\ell-1}
    \alpha^{}_{k,\ell,r}\int_{0}^{1} \sum_{t=0}^{k+\ell-r}
    \cos\bigl(2\pi\tfrac{ry+rt}{k+\ell}\bigr)\dd G(y)\: = \: 1\, .
\end{split} 
\]
The last step follows because the sum under the integral vanishes as a
consequence of the relation $\sum_{t=0}^{m-1}\exp\bigl(2\pi i
\,\frac{r t}{m}\bigr) = 0$ for $1\le r<m$.

To establish the convergence, let us first show pointwise convergence
$F^{(N)} (x) \xrightarrow{N\to\infty} F(x)$ for all $x\in\RR$.
Observe that $F^{(N)} (x) - x$ defines a $1$-periodic, continuous
function on $\RR$ for each $N\in\NN_{0}$, and so does $F(x) - x$.
Their restrictions to $[0,1]$ define regular (signed) measures on the
unit circle, $\nu^{(N)}$ and $\nu$ say, with Fourier-Stieltjes
coefficients $a^{(N)}_{n}$ and $a^{}_{n}$, where $n\in\ZZ$. The latter
coefficients follow from Proposition~\ref{prop:distribution}, and the
former from the observation that every $G\in D$ has a uniformly
converging Fourier series of the form
\[
   G (x) \, = \, x + \sum_{n=1}^{\infty} 
   \frac{\beta_{n}}{n \ts \pi} \sin (2 \pi n x) \ts ,
\]
where $\lvert \beta_{n} \rvert \le 1$ by an application of
\cite[Thm.~II.4.12]{Z}. Here, we have $\beta^{(N+1)} = \Psi
\beta^{(N)}$ with the mapping $\Psi$ of Lemma~\ref{lem:seq-rec}, with
initial condition $\beta^{(0)}$ and convergence $\beta^{(N)}_{n}
\xrightarrow{N\to\infty}\eta(n)$ for each $n\in\NN$. Consequently,
$a^{(N)}_{n} \xrightarrow{N\to\infty} a^{}_{n}$ for all $n\in\ZZ$,
which means that $\nu^{(N)} (p) \xrightarrow{N\to\infty} \nu (p)$ for
any trigonometric polynomial $p$ with period $1$, and hence (by an
application of the Stone-Weierstrass theorem \cite{RS}) for all
continuous functions on $[0,1]$. This proves weak convergence
$\nu^{(N)} \xrightarrow{N\to\infty} \nu$ on the unit circle. Since all
measures are absolutely continuous, with continuous Radon-Nikodym
densities, this implies pointwise convergence $F^{(N)} (x) - x
\xrightarrow{N\to\infty} F(x) - x$ for all $x\in [0,1]$, and hence (by
periodicity) the pointwise convergence on $\RR$ claimed above.

Uniform convergence on $\RR$ now follows from that on $[0,1]$, which
can be shown via the `stepping-stone' argument from the proof of
\cite[Thm.~30.13]{Bauer}. For completeness, we spell out the
details. Given $\varepsilon > 0$, there are numbers $m\in\NN$ and
$0=x^{}_{0} < x^{}_{1} < \cdots < x^{}_{m}=1$ with
\[
    \lvert F( x_{i} ) - F( x_{i-1}) \rvert \, = \,
     F(x_{i}) - F(x_{i-1}) \, < \, \varepsilon
\]
for $1 \le i \le m$. Also, for all sufficiently large $N\in\NN$, one
has
\[
   \lvert F^{(N)} (x_{i}) - F (x_{i}) \rvert \, < \, \varepsilon
\]
for all $0 \le i \le m$. Since $F^{(N)} (1) = F(1) = 1$, consider now an 
arbitrary $x\in [0,1)$, so that $x_{i-1} \le x < x_{i}$ for precisely one 
$i\in \{1,2,\ldots,m\}$. Using monotonicity, this implies the inequalities
\[
    F (x_{i-1}) \, \le \, F (x) \, \le \, F (x_{i})
    \, < \, F (x_{i-1}) + \varepsilon
\]
and
\[
    F (x_{i-1}) - \varepsilon \, < \, F^{(N)} (x_{i-1})
    \, \le \, F^{(N)} (x) \, \le \, F^{(N)} (x_{i})
    \, < \, F (x_{i}) + \varepsilon \, < \,
    F (x_{i-1}) + 2 \varepsilon \ts .
\]
Together, they give $\lvert F^{(N)} (x) - F (x) \rvert <
2 \varepsilon$, which holds for all $x\in [0,1]$, and then
uniformly for all $x\in\RR$, as $F^{(N)} - F$ is $1$-periodic
for all $N\in\NN_{0}$.
\end{proof}

Due to our convergence results, Equation~\eqref{eq:riesz} means that
the measure $\widehat{\gamma}$ has a (vaguely convergent)
representation as the infinite \emph{Riesz product} $ \prod_{n\ge 0}
\vartheta \bigl( (k+\ell)^{n} x \bigr)$.  The entire analysis is thus
completely analogous to that of the original TM sequence and shows
that the latter is a typical example in an infinite family. Two further
cases are illustrated in Figure~\ref{fig2}.

\begin{figure}
\bigskip
\begin{center}
\!\!\!\!\!\!
\includegraphics[width=0.48\textwidth]{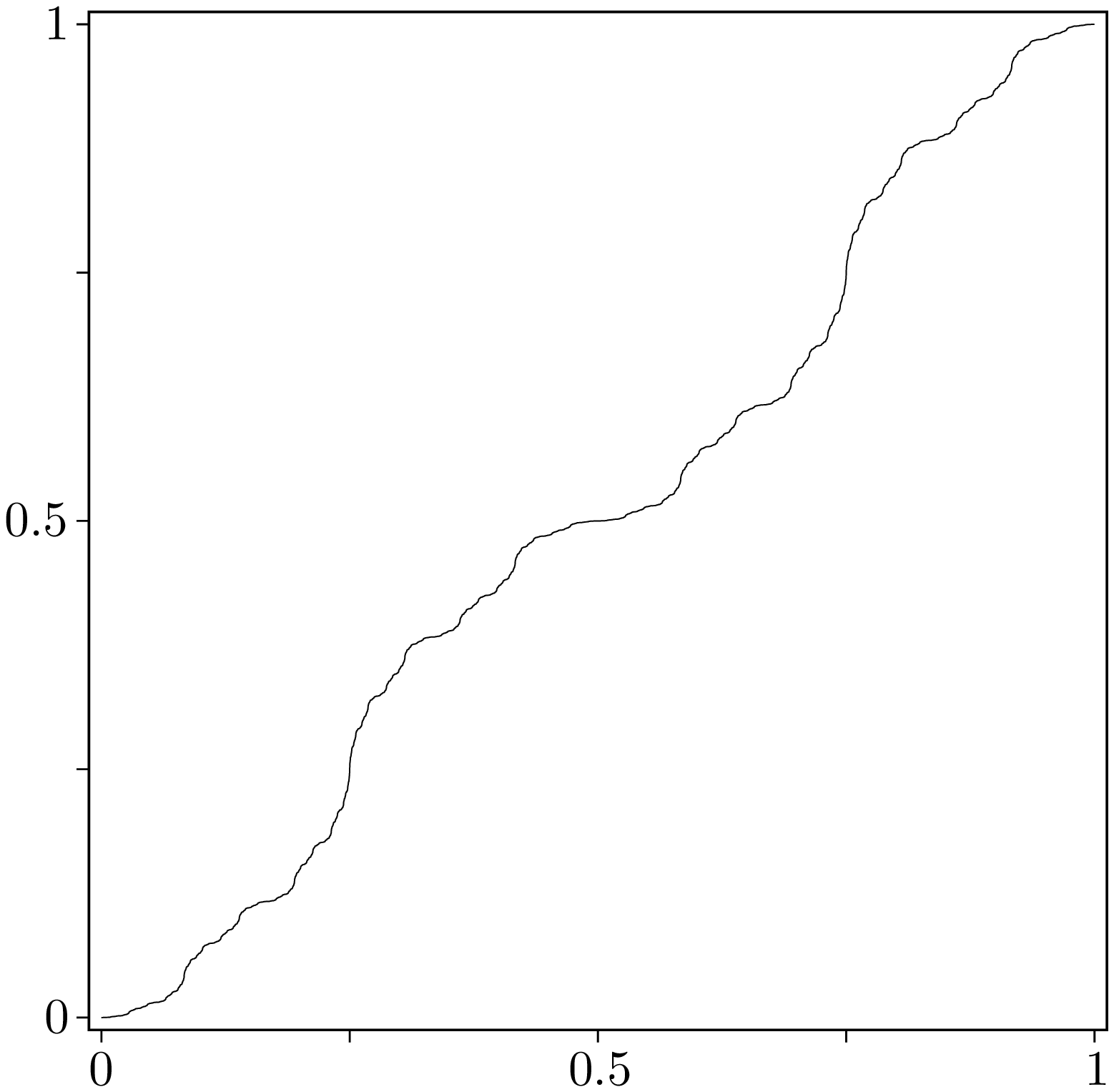} \quad
\includegraphics[width=0.48\textwidth]{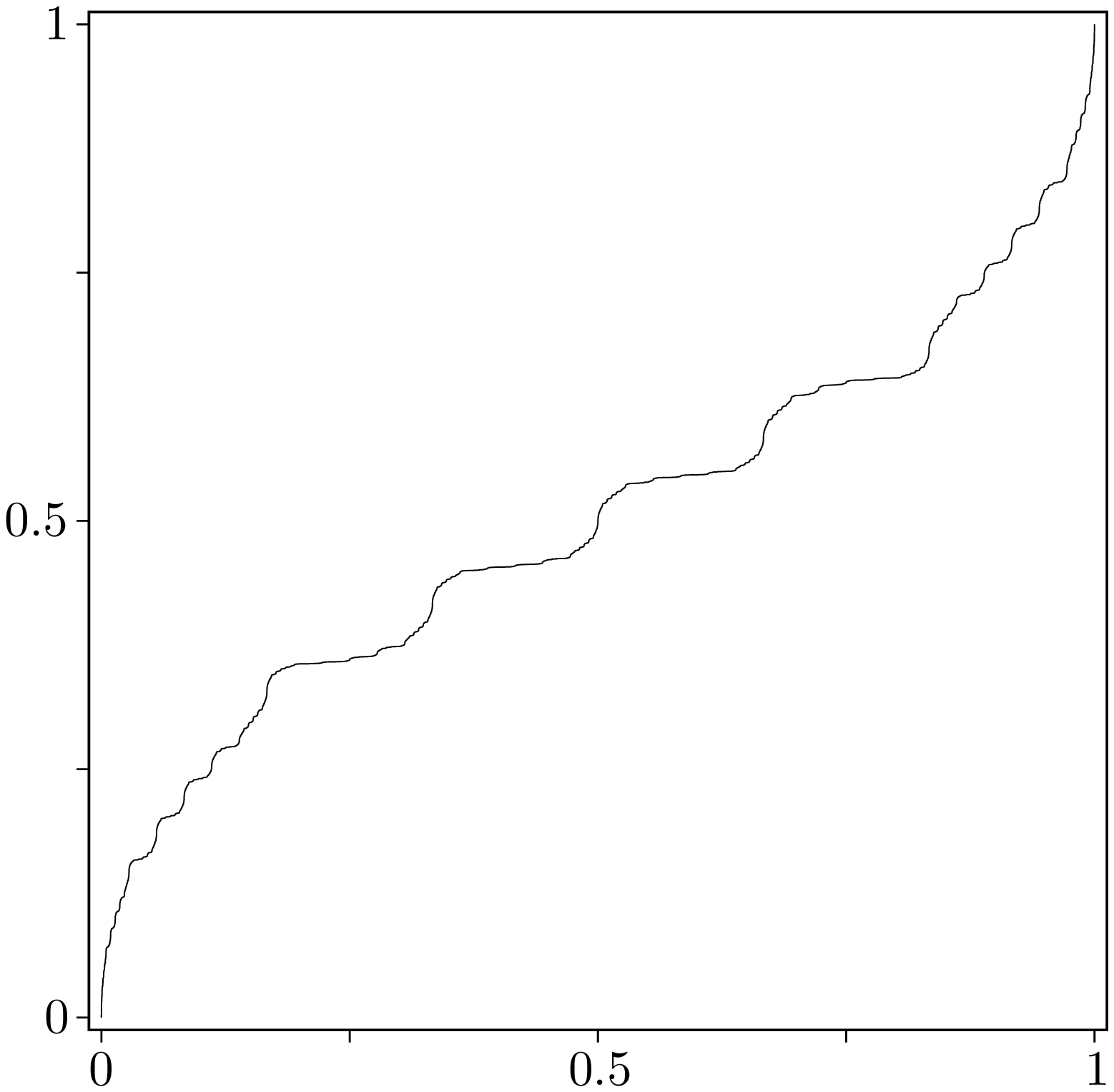}
\end{center}
\caption{\label{fig2}The continuous and strictly increasing
  distribution functions of the generalised Thue-Morse measures on $[0,1]$
  for parameters $(k,\ell)=(2,1)$ (left) and $(5,1)$ (right).}
\end{figure}

\begin{remark}[\textsl{Pure point factors}]\label{rem:Kronecker}
The block map \eqref{eq:block-map} applies to any member of the
gTM family, and always gives a 2-to-1 cover of the hull
$\XX^{\mathrm{pd}}_{k,\ell}$ of the \emph{generalised period doubling} (gpd)
substitution
\begin{equation}\label{eq:defgpd}
   \varrho^{\ts\prime} =  \varrho^{\ts\prime}_{k,\ell} \! : \quad
   a \mapsto b^{k-1} a b^{\ell-1} b \ts , \quad
   b \mapsto b^{k-1} a b^{\ell-1} a \ts .
\end{equation}
Since we always have a coincidence (at the $k$th position) in the
sense of Dekking \cite{D}, they all define systems with pure point
spectrum (which can be described as model sets in the spirit of
\cite{BMS,BM}) -- another analogy to the classic case $k=\ell=1$.
Also, for given $k,\ell \in \NN$, the dynamical system
$(\XX^{\mathrm{pd}}_{k,\ell},\ZZ)$ is a topological factor of
$(\XX^{\mathrm{TM}}_{k,\ell},\ZZ)$.

Moreover, the dynamical spectrum of the gTM system contains $\ZZ
[\frac{1}{k+\ell}]$ as its pure point part, which happens to be the
entire spectrum of the gpd system. The latter fact can be derived from
the support of the gpd diffraction measure, via the general
correspondence between the dynamical and the diffraction spectrum for
pure point systems \cite{LMS,BL}. The detailed calculations can be
done in analogy to the treatment of the period doubling system in
\cite{BM,BG11}, after an explicit formulation of the one-sided fixed
point of the gpd substitution, which results in the Fourier module
$\ZZ [\frac{1}{k+\ell}]$. Consequently, the gpd system is a
topological factor of the gTM system with maximal pure point spectrum.  

The gpd system can be described as a model set (with suitable $p$-adic
type internal space). From \cite{BLM}, we know that there exists an
almost everywhere $1$-to-$1$ `torus' parametrisation via a solenoid
$\mathbb{S}_{k+\ell}$.  Here, the inflation acts as multiplication by
$k \! + \!\ell$. In fact, the solenoid provides the maximum
equicontinuous (or Kronecker) factor of the gTM system.  \exend
\end{remark}

\section{Topological invariants}\label{sec:topol}

The formulation via Dirac combs embeds the symbolic sequences into the
class of translation bounded measures on $\RR$, as described in a
general setting in \cite{BL}. It is thus natural to (also) consider
the continuous counterpart of the discrete hull $\XX$ in the form
\[
    \YY \, = \, \overline{ \{\delta_{t} * \omega \mid t\in \RR \} } \ts ,
\]
where $\omega$ is an arbitrary element of $\XX$ (which is always
minimal in our situation) and the closure is now taken in the vague
topology. Here, $\YY$ is compact, and $(\YY,\RR)$ is a topological
dynamical system. Note that the continuous tiling hull mentioned
earlier is topologically conjugate, wherefore we use the same symbol
for both versions. The discrete hull $\XX$, in this formulation, is
homeomorphic with the compact set
\[
    \YY^{}_{0} \, = \, \{ \nu \in \YY \mid \nu (\{0\}) \ne 0\}
            \, \subset \, \YY \ts ,
\]
while $(\YY,\RR)$ is the suspension of the system $(\YY^{}_{0},\ZZ)$.

We are now going to construct the continuous hulls of the generalised
Thue-Morse and period doubling sequences as inverse limits of the
inflation map acting on an AP-complex \cite{AP}, and use this construction
to compute their \v{C}ech cohomology. At first sight, there are infinitely
many cases to be considered. We note, however, that the AP complex 
$\Gamma$ depends only on the atlas of all $r$-patterns occurring in the 
tiling, for some bounded $r$. In the original AP construction \cite{AP},
$\Gamma$ consists of $3$-patterns -- tiles with one collar tile on the 
left and one on the right. How these collared tiles are glued together in 
the complex is then completely determined by the set of $4$-patterns 
occurring in the tiling. Therefore, there are only finitely many different 
AP complexes $\Gamma$ to be considered.

For obvious reasons, we want to extend the factor map $\phi\!:\,
\YY^{\mathrm{TM}}_{k,\ell}\rightarrow\YY^{\mathrm{pd}}_{k,\ell}$
between the continuous hulls to the cell complexes approximating them,
compare (\ref{eq:diagram-approx-tm}). We therefore choose the cell
complex $\Gamma^{\mathrm{TM}}_{k,\ell}$ to consist of tiles with one
extra layer of collar on the right, compared to the cell complex
$\Gamma^{\mathrm{pd}}_{k,\ell}$, so that we obtain, for each pair
$(k,\ell)$, a well-defined factor map
$\phi\!:\,\Gamma^{\mathrm{TM}}_{k,\ell} \rightarrow
\Gamma^{\mathrm{pd}}_{k,\ell}$, which we denote by the same symbol.

While the extra collar seems to complicate things, we can compensate
this by a simplification compared to the original AP construction.  In
\cite{GM}, it was shown that a computation using a complex with
one-sided collars already yields the correct cohomology groups, even
though the inverse limit is not necessarily homeomorphic to the
continuous hull of the tiling. The minimal setup therefore consists of
AP complexes $\Gamma^{\mathrm{TM}}_{k,\ell}$ with tiles having
two-sided collars, and complexes $\Gamma^{\mathrm{pd}}_{k,\ell}$
having tiles with left collars only. All collars have thickness one.
A straight-forward analysis shows that only three cases have to be
distinguished: $k=\ell=1$ (the classical case), either $k=1$ or
$\ell=1$ (but not both), and $k,\ell\ge2$. The complexes
$\Gamma^{\mathrm{TM}}_{k,\ell}$ are all sub-complexes of the complex
shown in Figure~\ref{fig:gentm}.  If $k=\ell=1$, the two loops on the
left and right have to be omitted, because the corresponding patterns
$111$ and $\bar{1}\bar{1}\bar{1}$ do not occur in the classic TM
sequence. Likewise, in the case $k,\ell\ge2$, the lens in the centre
has to be omitted, whereas the full complex has to be used in all
remaining cases.

\begin{figure}
\begin{center}
\includegraphics[width=0.8\textwidth]{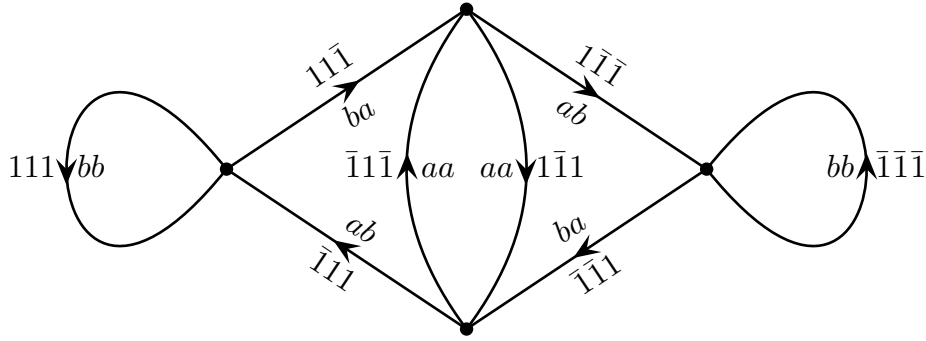} 
\end{center}
\caption{\label{fig:gentm}The universal complex for the gTM system.
Edges are labelled both with the corresponding collared tile of the 
gTM tiling and its image in the gpd tiling.}
\end{figure}

For the gpd sequences, an approximant complex that includes all different 
cases as sub-complexes can be constructed in a similar way. This complex
is shown in Figure~\ref{fig:genpd}. As we have labelled the edges in 
Figure~\ref{fig:gentm} with both the 3-pattern in the gTM sequence 
and the corresponding image under $\phi$ in the gpd sequence, it is 
easy to see how the factor map $\phi$ maps the complex of 
Figure~\ref{fig:gentm} to that of Figure~\ref{fig:genpd}: the loops on 
the left and right of Figure~\ref{fig:gentm} are both mapped onto the 
right loop of Figure~\ref{fig:genpd}, the central lens in 
Figure~\ref{fig:gentm} is wrapped twice around the left loop in 
Figure~\ref{fig:genpd}, and the rhombus in Figure~\ref{fig:gentm} is 
wrapped twice around the central loop in Figure~\ref{fig:genpd}. 
Therefore, $\Gamma^{\mathrm{pd}}_{k,\ell}$ consists of the left and the 
central loop of  Figure~\ref{fig:genpd} in the case $k=\ell=1$ (the 
classical case), the central and the right loop of Figure~\ref{fig:genpd} 
in the case $k,\ell\ge2$, and of all three loops in the remaining cases.
It is obvious that the map $\phi$ is uniformly $2$-to-$1$ also at the
level of the approximant complexes.

\begin{figure}
\begin{center}
\includegraphics[width=0.5\textwidth]{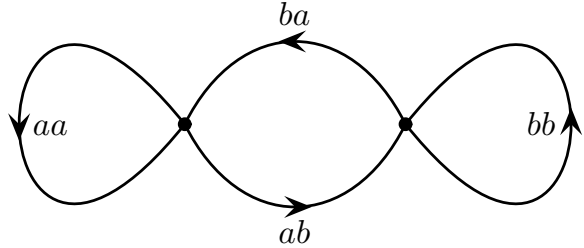} 
\end{center}
\caption{\label{fig:genpd}The universal complex for the
          gpd system. Its edges correspond to tiles with one-sided
          (left) collar.}
\end{figure}

The cohomology of the continuous hulls $\YY^{\mathrm{TM}}_{k,\ell}$ and
$\YY^{\mathrm{pd}}_{k,\ell}$ is now given by the direct limit of the induced
maps $\rho^*_{k,\ell}$ and  $\rho'^*_{k,\ell}$ on the cohomology of the
AP complexes, $H^*(\Gamma^{\mathrm{TM}}_{k,\ell})$ and 
$H^*(\Gamma^{\mathrm{pd}}_{k,\ell})$. While there are only 3 different cell 
complexes in each case, there is an infinite family of maps  
$\rho^*_{k,\ell}$, respectively  $\rho'^*_{k,\ell}$, but these can be 
parametrised by $k$ and $\ell$. 

We first define a basis of the homology of the full complexes of
Figures~\ref{fig:gentm} and \ref{fig:genpd}, subsets of which shall be
used for all $k$ and $\ell$, and pass later to the corresponding dual
basis for the cohomology. The homology of the full complex
$\Gamma^{\mathrm{TM}}_{k,\ell}$ is generated by the basis
\[
c_1^\mathrm{TM} = 1\bar{1}1, \quad c_2^\mathrm{TM} = \bar{1}1\bar{1}, \quad
c_3^\mathrm{TM} = 1\bar{1}\bar{1}1, \quad c_4^\mathrm{TM} = 1, \quad
c_5^\mathrm{TM} = \bar{1},
\] 
where the words on the right have to be thought of as being repeated
indefinitely. Likewise, the homology of
$\Gamma^{\mathrm{pd}}_{k,\ell}$ is generated by the cycles
\[
    c_1^\mathrm{pd} =\, a, \quad c_2^\mathrm{pd} =\, ab,
     \quad c_3^\mathrm{pd} =\, b.
\]
Clearly, we have
\[
\begin{gathered}
\phi_*(c_1^\mathrm{TM})\, =\, c_1^\mathrm{pd} + c_2^\mathrm{pd},\quad
\phi_*(c_2^\mathrm{TM})\, =\, c_1^\mathrm{pd} + c_2^\mathrm{pd},\quad
\phi_*(c_3^\mathrm{TM})\, =\, 2c_2^\mathrm{pd},\\
\phi_*(c_4^\mathrm{TM})\, =\, c_3^\mathrm{pd},\quad
\phi_*(c_5^\mathrm{TM})\, =\, c_3^\mathrm{pd}.
\end{gathered}
\]
Here, the lower asterisk in $\phi_*$ denotes the induced action on
homology, whereas $\phi^*$ denotes the action on cohomology.

In order to go into more detail, we have to distinguish
the three different cases. We begin with the classical case,
$k=\ell=1$. Here, the relevant basis elements are $c_1^\mathrm{TM}$,
$c_2^\mathrm{TM}$ and $c_3^\mathrm{TM}$ for TM, and $c_1^\mathrm{pd}$ and
$c_2^\mathrm{pd}$ for pd. On these, the substitution acts as 
\[
\begin{gathered}
\rho_*(c_1^\mathrm{TM})\, =\, c_1^\mathrm{TM} + c_2^\mathrm{TM},\quad
\rho_*(c_2^\mathrm{TM})\, =\, c_1^\mathrm{TM} + c_2^\mathrm{TM},\quad
\rho_*(c_3^\mathrm{TM})\, =\, 2\ts (c_1^\mathrm{TM} + c_2^\mathrm{TM})
    - c_3^\mathrm{TM},\\
\rho'_*(c_1^\mathrm{pd})\, =\, c_2^\mathrm{pd},\quad
\rho'_*(c_2^\mathrm{pd})\, =\, 2c_1^\mathrm{pd} + c_2^\mathrm{pd},
\end{gathered}
\]
where $\rho_*$ and $\rho'_*$ again denote the induced action on homology.
If we express these maps as matrices $A^\mathrm{TM}$ and $A^\mathrm{pd}$
acting from the left on column vectors, with respect to the basis
above, and likewise define a matrix $P$ for the action of $\phi_*$, 
we obtain
\begin{equation}
   A^\mathrm{pd}=\left(\begin{matrix} 0 & 2 \\ 1 & 1 \end{matrix}\right),\quad
   A^\mathrm{TM}=\left(\begin{matrix} 1 & 1 & 2 \\ 1 & 1 & 2 \\
                                0 & 0 & -1\end{matrix}\right),\quad
   P = \left(\begin{matrix} 1 & 1 & 0 \\ 1 & 1 & 2 \end{matrix}\right).
\label{eq:mats-classical}
\end{equation}
The corresponding action on the dual basis of cohomology is simply
given by the transposed matrices, or by the same matrices regarded
as acting from the right on row vectors. We adopt the latter viewpoint
here. It is easy to verify that the matrices (\ref{eq:mats-classical})
satisfy $A^\mathrm{pd}P = PA^\mathrm{TM}$. In other words, the substitution 
action commutes with that of $\phi$. $A^\mathrm{pd}$ and $A^\mathrm{TM}$ have
left eigenvectors and eigenvalues 
\begin{equation}
\left(\begin{matrix} 
   1 &  2 & | &  2 \\
   1 & -1 & | & -1 
\end{matrix}\right) 
\quad\text{and}\quad
\left(\begin{matrix} 
   1 &  1 & 0 & | &  2 \\ 
   0 &  0 & 1 & | & -1 \\
   1 & -1 & 0 & | &  0 
\end{matrix}\right).
\end{equation}

Next, we look at the case $k,\ell\ge 2$, which is still relatively simple.
Here, we have to work with the basis elements $c_3^\mathrm{TM}$, $c_4^\mathrm{TM}$
and $c_5^\mathrm{TM}$ for gTM, and $c_2^\mathrm{pd}$ and $c_3^\mathrm{pd}$ for gpd. 
The action of the substitution is then given by
\[
\begin{aligned}
  \rho_*(c_3^\mathrm{TM}) & \, = \, 3c_3^\mathrm{TM} + 
       (2(k+\ell)-6)(c_4^\mathrm{TM}+c_5^\mathrm{TM}),\\
  \rho_*(c_4^\mathrm{TM}) & \, = \, c_3^\mathrm{TM} + 
       (k-2)c_4^\mathrm{TM} + (\ell-2)c_5^\mathrm{TM},\\
  \rho_*(c_5^\mathrm{TM}) & \, = \, c_3^\mathrm{TM} + 
       (\ell-2)c_4^\mathrm{TM} + (k-2)c_5^\mathrm{TM},\\
  \rho'_*(c_2^\mathrm{pd}) & \, = \, 3c_2^\mathrm{pd} + 
                 2\ts (k+\ell-3)c_3^\mathrm{pd},\\
  \rho'_*(c_3^\mathrm{pd}) & \, = \, 2c_2^\mathrm{pd} + 
          (k+\ell-4)c_3^\mathrm{pd}.
\end{aligned}
\]
The corresponding matrices $A^\mathrm{pd}_{k,\ell}$, $A^\mathrm{TM}_{k,\ell}$ 
and $P^{}_{k,\ell}$ read 
\[
  A^\mathrm{pd}_{k,\ell} =
  \left(\begin{matrix} 3 & 2 \\
      2(k \! + \! \ell \! - \! 3) &
      k \! + \! \ell \! - \! 4 
      \end{matrix}\right),\
  A^\mathrm{TM}_{k,\ell}=\left(\begin{matrix} 
    3      & 1      & 1 \\ 
    2(k \! + \! \ell) \! - \! 6 & k \! - \! 2 & \ell \! - \! 2 \\ 
    2(k \! + \! \ell) \! - \! 6 & \ell \! - \! 2 & k \! - \! 2 
             \end{matrix}\right),\
  P^{}_{k,\ell} = \left(\begin{matrix} 
        2 & 0 & 0 \\ 0 & 1 & 1 \end{matrix}\right),
\]
and satisfy $A^\mathrm{pd}_{k,\ell}P^{}_{k,\ell} =
P^{}_{k,\ell}A^\mathrm{TM}_{k,\ell}$.  The matrices
$A^{\mathrm{pd}}_{k,\ell}$ and $A^{\mathrm{TM}}_{k,\ell}$ have left
eigenvectors and eigenvalues
\begin{equation}
\left(\begin{matrix} 
   2        &  1 & | & k \! + \! \ell \\
   k+\ell-3 & -2 & | &  -1 
\end{matrix}\right) 
\quad\text{and}\quad
\left(\begin{matrix} 
   4       &  1 &  1 &  | & k \! + \! \ell \\ 
  k+\ell-3 & -1 & -1 &  | & -1     \\
   0       &  1 & -1 &  | & k \! - \! \ell 
\end{matrix}\right).
\end{equation}

Finally, the case where $\min(k,\ell)=1$, but $k+\ell>2$, requires the full
AP complex. For notational ease, we consider the case $k=1$, $\ell\ge2$. The
case $\ell=1$, $k\ge2$ is completely analogous. The substitution then acts as 
follows on our basis:
\[
\begin{aligned}
  \rho_*(c_1^\mathrm{TM}) & \, = \, c_2^\mathrm{TM} + c_3^\mathrm{TM} + 
                        (\ell-1)c_4^\mathrm{TM} + (2\ell-3)c_5^\mathrm{TM},\\
  \rho_*(c_2^\mathrm{TM}) & \, = \, c_1^\mathrm{TM} + c_3^\mathrm{TM} + 
                        (\ell-1)c_5^\mathrm{TM} + (2\ell-3)c_4^\mathrm{TM},\\
  \rho_*(c_3^\mathrm{TM}) & \, = \, c_1^\mathrm{TM} + c_2^\mathrm{TM} + 
      c_3^\mathrm{TM} + (2\ell-3)(c_4^\mathrm{TM}+c_5^\mathrm{TM}),\\
  \rho_*(c_4^\mathrm{TM}) & \, = \, c_2^\mathrm{TM} + (\ell-2)c_5^\mathrm{TM},\\
  \rho_*(c_5^\mathrm{TM}) & \, = \, c_1^\mathrm{TM} + (\ell-2)c_4^\mathrm{TM},\\
  \rho'_*(c_1^\mathrm{pd}) & \, = \, c_2^\mathrm{pd} + (\ell-1)c_3^\mathrm{pd},\\
  \rho'_*(c_2^\mathrm{pd}) & \, = \, c_1^\mathrm{pd} + 2c_2^\mathrm{pd} + 
                         (2\ell-3)c_3^\mathrm{pd},\\
  \rho'_*(c_3^\mathrm{pd}) & \, = \, c_1^\mathrm{pd} + c_2^\mathrm{pd} + 
                 (\ell-2)c_3^\mathrm{pd}.
\end{aligned}
\]
The corresponding matrices are
\[
\begin{gathered}
A^\mathrm{pd}_{k,\ell}=\left(\begin{matrix} 
     0   &    1    &   1 \\
     1   &    2    &   1 \\
  \ell-1 & 2\ell-3 & \ell-2  \end{matrix}\right),\quad
A^\mathrm{TM}_{k,\ell}=\left(\begin{matrix} 
      0   &     1   &     1   &    0   &    1   \\
      1   &     0   &     1   &    1   &    0   \\
      1   &     1   &     1   &    0   &    0   \\
   \ell \! - \! 1 & 2\ell \! - \! 3 & 2\ell\! - \! 3 
                  &    0   & \ell \! - \! 2 \\ 
  2\ell \! - \! 3 &  \ell \! - \! 1 & 2\ell \! - \! 3 
                  & \ell \! - \! 2 &    0   \end{matrix}\right),\\
P^{}_{k,\ell} = \left(\begin{matrix} 1 & 1 & 0 & 0 & 0 \\ 
                         1 & 1 & 2 & 0 & 0 \\ 
                         0 & 0 & 0 & 1 & 1  \end{matrix}\right),
\end{gathered}
\]
In this case, $A^\mathrm{TM}_{k,\ell}$ and $A^\mathrm{pg}_{k,\ell}$ 
have left eigenvectors and eigenvalues
\begin{equation}
\left(\begin{matrix} 
    1   &     2           &   1 & | & \ell \! + \! 1  \\
  \ell  &  \ell \! - \! 2 &  -2 & | & -1      \\
    1   &  1 \! - \! \ell &   1 & | &  0  \end{matrix}\right) 
   \quad\text{and}\quad
\left(\begin{matrix} 
      3   &     3   &     4   &    1   &    1 & | & \ell+1  \\
   1 \! - \! \ell &  1 \! - \! \ell &  2 \! - \! \ell 
                  &    1   &    1 & | & -1  \\
      1   &    -1   &     0   &    1   &   -1 & | & 1-\ell  \\
   2 \! - \! \ell &  2 \! - \! \ell & 2 \! - \! 2\ell 
                  &    1   &    1 & | & 0 \\ 
   2 \! - \! \ell &  \ell \! - \! 2 &     0   
                  &    1   &   -1 & | & 0  \end{matrix}\right).
\end{equation}

As the cohomology of the continuous hull is given by the direct limit
of the action of the substitution on the cohomology of the AP
approximant complexes, it is determined by the non-zero eigenvalues of
the matrices $A^\mathrm{TM}_{k,\ell}$ and
$A^\mathrm{pd}_{k,\ell}$. For all $k$ and $\ell$, these eigenvalues
are $k+\ell$ and $-1$ for the gpd case, and $k+\ell$, $-1$ and
$k-\ell$ in the gTM case, where the last eigenvalue is relevant only
if $k\ne\ell$. Since $H^0(\Gamma)=\ZZ$ if $\Gamma$ is connected, and
the substitution action on $H^0(\Gamma)$ is trivial, with eigenvalue
1, we can summarise our cohomology results as follows.

\begin{theorem}
  The \v{C}ech cohomology of the continuous hull of the gTM sequences
  is given by $  H^0(\YY^\mathrm{TM}_{k,\ell}) = \ZZ$ and
\[
  H^1(\YY^\mathrm{TM}_{k,\ell}) \, = \, \begin{cases}
     \ZZ[\frac1{k+\ell}] \oplus \ZZ \oplus \ZZ[\frac1{|k-\ell|}] , &
        \text{ if } |k-\ell|\ge 2 , \\
     \ZZ[\frac1{k+\ell}] \oplus \ZZ^2 , & 
        \text{ if } |k-\ell|=1 , \\
     \ZZ[\frac1{k+\ell}] \oplus \ZZ ,  & 
        \text{ if } k=\ell . \end{cases}
\]
The \v{C}ech cohomology of the continuous hull of the generalised 
period doubling sequences is given by
\[
  H^0(\YY^\mathrm{pd}_{k,\ell}) \, = \, \ZZ,\quad
  H^1(\YY^\mathrm{pd}_{k,\ell}) \, = \, 
  \ZZ[{\textstyle\frac{1}{k+\ell}}] \oplus \ZZ ,
\]
valid for any pair $k,\ell\in\NN$. \qed
\end{theorem}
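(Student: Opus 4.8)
The plan is to read the result off the eigenvalue data already assembled, so that the proof is essentially an assembly step. I would begin by recalling the general principle of \cite{AP}: the \v{C}ech cohomology of the continuous hull is the direct limit $H^{k}(\YY)=\varinjlim\bigl(H^{k}(\Gamma),\varrho^{*}\bigr)$ of the map induced by the substitution on the cohomology of the AP approximant complex $\Gamma$, where by \cite{GM} the one-sided-collar complexes of Figures~\ref{fig:gentm} and \ref{fig:genpd} already compute the correct groups. Since each of the (at most) three complexes occurring for $\YY^{\mathrm{TM}}_{k,\ell}$, and each of the three for $\YY^{\mathrm{pd}}_{k,\ell}$, is connected, $H^{0}(\Gamma)=\ZZ$ with $\varrho^{*}$ the identity; hence $H^{0}=\ZZ$ throughout, which settles the $H^{0}$ claims at once.

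For $H^{1}$ the point is that each $\Gamma$ is a finite graph, so $H^{1}(\Gamma)$ is free abelian of rank equal to its cycle rank, and $\varrho^{*}$ acts on it through the matrices $A^{\mathrm{TM}}_{k,\ell}$, respectively $A^{\mathrm{pd}}_{k,\ell}$, acting on row vectors from the right (equivalently, their transposes on column vectors), as set up above. All of these matrices are diagonalisable over $\QQ$ — explicit eigenbases were exhibited — and their nonzero eigenvalues are $k+\ell$ and $-1$ in the gpd case and $k+\ell$, $-1$, $k-\ell$ in the gTM case. The one substantive input I would invoke is the standard direct-limit computation: for an integer matrix $M$ that is diagonalisable over $\QQ$, one has $\varinjlim(\ZZ^{n},M)\cong\bigoplus_{\lambda\neq0}\ZZ[1/\lambda]$, the sum taken over the nonzero eigenvalues with multiplicity, with $\ZZ[1/\lambda]=\ZZ$ when $\lambda=\pm1$ and $\ZZ[1/\lambda]\cong\ZZ[1/|\lambda|]$ in general, while a zero eigenvalue contributes nothing. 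This is precisely the mechanism of \cite{AP}; see also \cite{Sadun}.

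With this in hand, I would split into the cases listed in the statement. For the generalised period doubling hull the eigenvalue pair $\{k+\ell,-1\}$ gives $H^{1}(\YY^{\mathrm{pd}}_{k,\ell})=\ZZ[1/(k+\ell)]\oplus\ZZ$ for all $k,\ell\in\NN$. For the generalised Thue--Morse hull the eigenvalues $k+\ell$, $-1$, $k-\ell$ yield $\ZZ[1/(k+\ell)]\oplus\ZZ\oplus\ZZ[1/|k-\ell|]$ when $|k-\ell|\ge2$; $\ZZ[1/(k+\ell)]\oplus\ZZ^{2}$ when $|k-\ell|=1$ (the third eigenvalue being $\pm1$, hence contributing a second $\ZZ$); and $\ZZ[1/(k+\ell)]\oplus\ZZ$ when $k=\ell$ (the third eigenvalue being $0$, hence dropping out). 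Together with $H^{0}=\ZZ$ this is exactly the asserted table.

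The only part I expect to offer genuine resistance is the direct-limit lemma, because the $\QQ$-eigenbasis of $M$ is in general a proper finite-index sublattice of $\ZZ^{n}$, so one cannot literally diagonalise over $\ZZ$. I would resolve this prime by prime: a prime $p$ dividing the index of the eigenbasis but not $\det M$ leaves $M$ invertible over $\ZZ_{(p)}$, so the limit stabilises there; and if $p\mid\det M$ then $p$ divides some eigenvalue, and since a prime dividing one eigenvalue together with a difference of two eigenvalues divides the second eigenvalue as well, the eigenspaces whose eigenvalue is divisible by $p$ are exactly those that localise to $\QQ$. Over $\ZZ_{(p)}$ the characteristic polynomial then factors into pairwise coprime-mod-$p$ pieces, so $\ZZ_{(p)}^{n}$ splits $M$-invariantly into a locally invertible summand (stationary limit) and a summand whose eigenvalues are all divisible by $p$ (limit equal to $\QQ^{m}$), and this reproduces $\bigoplus_{\lambda}\ZZ_{(p)}[1/\lambda]$ at every prime, hence the claimed group. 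As the paper treats this as folklore, I would cite it rather than reprove it.
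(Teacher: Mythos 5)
Your proposal is correct and follows essentially the same route as the paper: it reads $H^0$ off connectedness of the AP complexes and obtains $H^1$ as the direct limit of the substitution action on $H^1(\Gamma)$, governed by the non-zero eigenvalues $k+\ell$, $-1$ (and $k-\ell$ for gTM) of the matrices $A^{\mathrm{pd}}_{k,\ell}$ and $A^{\mathrm{TM}}_{k,\ell}$, with the same case split according to $\lvert k-\ell\rvert$. The only difference is that you sketch a prime-by-prime justification of the direct-limit computation, which the paper simply invokes as standard.
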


Since the cohomology of the hulls is given by the direct limit of
the columns of the diagram  
\begin{equation} \label{eq:diagram-approx-gtm}
   \begin{CD}
    H^k(\Gamma^{\mathrm{sol}}_{k,\ell})   @>\psi^*>> H^k(\Gamma^{\mathrm{pd}}_{k,\ell}) 
    @>\phi^*>> H^k(\Gamma^{\mathrm{TM}}_{k,\ell}) \\
    @V \times (k+\ell) VV @V\varrho^{\ts\prime*}VV @VV \varrho^* V \\
    H^k(\Gamma^{\mathrm{sol}}_{k,\ell})   @>\psi^*>> H^k(\Gamma^{\mathrm{pd}}_{k,\ell}) 
    @>\phi^*>> H^k(\Gamma^{\mathrm{TM}}_{k,\ell}) \\
   \end{CD}
\end{equation}
(compare (\ref{eq:diagram-approx-tm})), we now have access also to the
homomorphisms in the sequence
\begin{equation}
  \begin{CD}
    H^*(\mathbb{S}_{k+\ell}) @>\psi^*>> H^*(\YY^{\mathrm{pd}}_{k,\ell})
    @>\phi^*>> H^*(\YY^{\mathrm{TM}}_{k,\ell}).
  \end{CD}
\end{equation}
For all three spaces, $H^0=\ZZ$, and the maps $\phi^*$ and $\psi^*$
acting on them are isomorphisms. Further, it is easy to see that
$\psi^*$ embeds $H^1(\mathbb{S}_{k+\ell})=\ZZ[\frac1{k+\ell}]$
non-divisibly in $H^1(\YY^{\mathrm{pd}}_{k,\ell})$; it is simply mapped
isomorphically onto the summand $\ZZ[\frac1{k+\ell}]$ in
$H^1(\YY^{\mathrm{pd}}_{k,\ell})$.  The same cannot be said of the second
map, however. For all pairs $k,\ell$, the matrix $P$ maps the
eigenvector of $A^{\mathrm{pd}}_{k,\ell}$ with eigenvalue $-1$ to twice
the corresponding eigenvector of $A^{\mathrm{TM}}_{k,\ell}$.  As a
result, the quotient $H^1(\YY^{\mathrm{TM}}_{k,\ell}) / \rho^*(
H^1(\YY^{\mathrm{pd}}_{k,\ell}))$ has a torsion component $\ZZ_2$; the
summand $\ZZ$ of $H^1(\YY^{\mathrm{pd}}_{k,\ell})$ coming from the
eigenvalue $-1$ is mapped to $2\ZZ$ in
$H^1(\YY^{\mathrm{TM}}_{k,\ell})$. This result was already known for the
classic period doubling and Thue-Morse sequences \cite{BS}, and
extends to the generalised sequences. Therefore, in the
case $k=\ell$, where the cohomologies of gTM and gpd are isomorphic as
groups, one should rather regard $H^1(\YY^{\mathrm{pd}}_{k,\ell})$ as
subgroup of index 2 of $H^1(\YY^{\mathrm{TM}}_{k,\ell})$.

\section{Dynamical zeta functions}\label{sec:zeta}

The continuous hull permits to employ the Anderson-Putnam method
\cite{AP} for the calculation of the dynamical zeta function of the
\emph{inflation} action on $\YY$.  The dynamical zeta function
\cite{Ruelle} of a substitution can be viewed as a generating function
for the number of fixed points $a (n)$ under $n$-fold substitution via
\begin{equation}\label{eq:zetadef}
    \zeta (z) \, = \, \exp \Bigl(
    \sum_{n=1}^{\infty} \frac{a (n)}{n}\, z^{n} \Bigr).
\end{equation}
Knowing the fixed point counts $a(n)$, one can calculate the number
$c(n)$ of cycles of length $n$ from the formula
\[
    c(n)\, =\, \frac{1}{n} \sum_{d|n} \mu(\tfrac{n}{d}) \, a(d) \ts , 
\]  
where $\mu$ is the M\"{o}bius function from elementary number theory
(and should not be confused with the measure $\mu$ that appeared
above).

Anderson and Putnam \cite{AP} showed how the dynamical zeta
function of a substitution tiling can be expressed by the action
of the substitution on the cohomology of the AP complex $\Gamma$.
In the one-dimensional case, the zeta function is given by
\[
   \zeta(z) \, = \,
   \frac{\det({\mathbbm{1}}-zA^0)}{\det({\mathbbm{1}}-zA^1)},
\]
where $A^k$ is the matrix of the substitution action on $H^k(\Gamma)$.
In our case, $A^0$ is a $1\times1$ unit matrix, and $A^1$ is diagonalisable,
so that we can rewrite the zeta function as
\[
   \zeta(z) \, = \, \frac{1-z}{\prod_i (1-z\lambda_i)},
\]
where $\lambda_i$ are the eigenvalues of $A^1$. For the gTM and gpd
tilings, these eigenvalues have been derived above, so that we arrive 
(after a simple calculation) at the following theorem.

\begin{theorem}
   Let\/ $k,\ell\in\NN$. The generalised Thue-Morse sequence defined by the
   inflation\/ $\varrho^{}_{k,\ell}$ of\/ \eqref{eq:gen-tm-def}
   possesses the dynamical zeta function
\[
   \zeta^{\mathrm{TM}}_{k,\ell} (z) \, = \,
   \frac{1-z}{(1+z)(1-(k+\ell)z) (1-(k-\ell)z)} \ts ,
\]
   while the induced generalised period doubling sequence, as defined
   by $\varrho^{\ts\prime}_{k,\ell}$ of\/ \eqref{eq:defgpd}, leads to
\[
   \zeta^{\mathrm{pd}}_{k,\ell} (z) \, = \,
   \frac{1-z}{(1+z)(1-(k+\ell)z)} \ts .
\]
In particular, when\/ $k=\ell$, one has\/ $\zeta_{\phantom{l}}^{\mathrm{TM}} =
  \zeta_{\phantom{l}}^{\mathrm{pd}}$.  \qed
\end{theorem}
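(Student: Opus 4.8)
The plan is to feed the eigenvalue data of Section~\ref{sec:topol} into the Anderson--Putnam formula recalled just above the statement, namely $\zeta(z) = \det(\mathbbm{1}-zA^0)/\det(\mathbbm{1}-zA^1)$, where $A^k$ denotes the matrix of the substitution action on the \v{C}ech cohomology $H^k(\Gamma)$ of the relevant AP complex $\Gamma$. The numerator is the same in both cases: each complex $\Gamma^{\mathrm{TM}}_{k,\ell}$ and $\Gamma^{\mathrm{pd}}_{k,\ell}$ is connected, so $H^0(\Gamma)=\ZZ$, and the cellular map induced by the (orientation-preserving) primitive substitution fixes the generator of $H^0$; hence $A^0$ is the $1\times 1$ identity and $\det(\mathbbm{1}-zA^0)=1-z$. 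This produces the common numerator of the two claimed rational functions.

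For the denominator, I rely on the identity $\det(\mathbbm{1}-zA^1)=\prod_i(1-z\lambda_i)$, the product ranging over the eigenvalues $\lambda_i$ of $A^1$ counted with algebraic multiplicity, so that zero eigenvalues contribute only trivial factors and the diagonalisability noted earlier is not even needed. Section~\ref{sec:topol} records, for every pair $(k,\ell)$, the matrices of the substitution action together with their eigenvalues: the non-zero eigenvalues are $k+\ell$ and $-1$ in the gpd case, and $k+\ell$, $-1$ and $k-\ell$ in the gTM case (eigenvalues of a matrix and its transpose agree, so it is immaterial that those matrices were written for homology). Hence $\det(\mathbbm{1}-zA^1)=(1+z)(1-(k+\ell)z)$ for gpd and $\det(\mathbbm{1}-zA^1)=(1+z)(1-(k+\ell)z)(1-(k-\ell)z)$ for gTM, where in the latter the factor $1-(k-\ell)z$ may be written uniformly because, for $k=\ell$, the corresponding eigenvalue is $0$ and this factor equals $1$. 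Dividing $1-z$ by these two denominators yields $\zeta^{\mathrm{pd}}_{k,\ell}$ and $\zeta^{\mathrm{TM}}_{k,\ell}$ exactly as stated, and the final assertion follows at once, since for $k=\ell$ the factor $1-(k-\ell)z$ collapses to $1$.

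The main (and rather mild) point requiring attention is that the eigenvalues inserted into the formula must be those of the substitution acting on $H^1$ of the \emph{specific} complex used for each $(k,\ell)$ --- a sub-complex of the universal complexes of Figures~\ref{fig:gentm} and~\ref{fig:genpd} --- rather than spurious ones coming from the universal model; but this is precisely what the eigenvalue tables of Section~\ref{sec:topol} record, and passing to a sub-complex can only add zero eigenvalues, which leave $\zeta$ unchanged. One should also keep in mind that the applicability of the Anderson--Putnam formula here is ensured by working with collared AP complexes, as arranged at the start of Section~\ref{sec:topol}; granting this, the remaining computation is the short determinant evaluation indicated above.
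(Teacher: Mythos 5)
Your proposal is correct and follows essentially the same route as the paper: the authors likewise insert the eigenvalues of the substitution action on $H^1$ of the AP complexes (namely $k+\ell$, $-1$ for gpd and additionally $k-\ell$ for gTM), computed in Section~\ref{sec:topol}, into the Anderson--Putnam formula $\zeta(z)=\det(\mathbbm{1}-zA^0)/\det(\mathbbm{1}-zA^1)$ with $\det(\mathbbm{1}-zA^0)=1-z$, and obtain the stated rational functions, the case $k=\ell$ being the degenerate one where the factor $1-(k-\ell)z$ is trivial. Your additional remarks (zero eigenvalues only contribute trivial factors, transposition does not change the spectrum, collared complexes justify the formula) are consistent with, and slightly more explicit than, the paper's ``after a simple calculation''.
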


For our two systems, the corresponding fixed point counts, for
$n\in\NN$, can now be calculated from Eq.~\eqref{eq:zetadef} to be
\begin{equation} \label{eq:fix-count}
    a^{\mathrm{pd}}_{k,\ell} (n) \, = \,
    (k+\ell)^n - \bigl(1 - (-1)^n\bigr)
    \quad \text{and} \quad
    a^{\mathrm{TM}}_{k,\ell} (n) \, = \,
    a^{\mathrm{pd}}_{k,\ell} (n) + (k-\ell)^n .
\end{equation}
It is an interesting exercise to relate the orbits of the two systems
according to the action of the mapping $\phi$ in agreement with these
counts.

There is an interesting connection between the inflation on
$\YY^{\mathrm{TM}}_{k,\ell}$ or $\YY^{\mathrm{pd}}_{k,\ell}$ and the 
multiplication by $(k+\ell)$ on the (matching) solenoid 
$\mathbb{S}_{k+\ell}$, which emerges from the torus parametrisation 
\cite{BHP,M,BLM}. This solenoid is a set on which the multiplication 
is invertible. It is constructed via a suitable inverse limit 
structure (under iterated multiplication by $(k+\ell)$), starting 
from the $1$-torus (or unit circle), represented by the unit interval 
$[0,1)$ with arithmetic taken mod $1$. Together with the above, 
we obtain the following commutative diagram
\begin{equation} \label{eq:diagram}
   \begin{CD}
    \YY^{\mathrm{TM}}_{k,\ell}   @>\phi>> \YY^{\mathrm{pd}}_{k,\ell} 
        @>\psi>> \mathbb{S}^{}_{k+\ell} \\
    @V \varrho VV @V\varrho^{\ts\prime}VV @VV \times (k+\ell)V \\
    \YY^{\mathrm{TM}}_{k,\ell}   @>\phi>> \YY^{\mathrm{pd}}_{k,\ell} 
        @>\psi>> \mathbb{S}^{}_{k+\ell}
   \end{CD}
\end{equation}
where $\psi$ denotes the torus parametrisation for the generalised
period doubling sequence, in the spirit of \cite{M,BLM,BMS}. The mapping 
$\psi$ is $1$-to-$1$ almost everywhere. Like for the classic period doubling
sequence, it fails to be $1$-to-$1$ on exactly two translation orbits,
which are mapped to a single orbit.

Counting \emph{finite} periodic orbits under the multiplication action
on the solenoid, however, means that the inverse limit is not needed,
so that the corresponding dynamical (or Artin-Mazur) zeta function
coincides with that of the toral endomorphism represented by
multiplication by $m=k+\ell \ge 2$. This, in turn, is given by
\[
     \zeta^{\mathrm{sol}}_{m} (z) \, = \, \frac{1-z}{1-mz}
\]
by an application of \cite[Thm.~1]{BLP}. The number of fixed points
is given by $a^{\mathrm{sol}}_{m} (n) = m^{n} -1$, where $m=k+\ell$
as before for a comparison with \eqref{eq:fix-count}.

\section{Further directions}

A natural question concerns the robustness of the singular continuous
spectrum under simultaneous permutations of positions in $\varrho(1)$
and $\varrho(\bar{1})$, as considered in \cite{Zaks}.  For $k=\ell=1$, the
two possible rules are the Thue-Morse rule $(1\bar{1},\bar{1}1)$ and
its partner $(\bar{1}1,1\bar{1})$, written in obvious shorthand
notation. Since the squares of these two rules are equal, they define
the same hull, and hence the same autocorrelation.

There are three possible rules for $k=2$ and $\ell=1$, namely
$(11\bar{1},\bar{1}\bar{1}1)$, $(1\bar{1}1,\bar{1}1\bar{1})$ and
$(\bar{1}11,1\bar{1}\bar{1})$. The first is our $\varrho^{}_{2,1}$
from above, while the third results in a fixed point that is the
mirror image, and hence possesses the same autocorrelation. The middle
rule, however, has the periodic fixed point $\ldots 1\bar{1} 1\bar{1}
1\bar{1}\ldots$ and hence autocorrelation $\gamma =
(\delta_{0}-\delta_{1}) * \delta_{2\ZZ}$.  Its Fourier transform reads
\[
   \widehat{\gamma} \, = \, \delta_{\ZZ\ts+\frac{1}{2}}\, ,
\]
which is a periodic pure point measure. By general arguments, one can
see that the diffraction measure of the balanced situation (with $1$
and $\bar{1}$ being equally frequent) must be of pure type, and cannot
be a mixture. On the basis of the results from \cite{Nat1,Nat2}, it is
then clear that, given a permutation, the diffraction is either pure
point or purely singular continuous. It remains an interesting
question to decide this explicitly for general $k$ and $\ell$, and a
general permutation.

Moreover, it is clear that similar structures exist in higher
dimensions.  Indeed, starting from the treatment of bijective lattice
substitution systems in \cite{Nat1,Nat2}, it is possible to
demonstrate the singular continuous nature for bijective substitutions
with trivial height lattice and a binary alphabet, and to calculate it
explicitly in terms of Riesz products. Details will be explained in a
forthcoming publication \cite{BG12}.

\section*{Acknowledgement}
We are grateful to Michel Dekking, Xinghua Deng, Natalie Frank and
Daniel Lenz for helpful discussions. This work was supported by the
German Research Council (DFG), within the CRC 701, and by a Leverhulme
Visiting Professorship grant (MB). We thank the Erwin Schr\"{o}dinger
Institute in Vienna for hospitality, where part of the work was done.

\end{document}